\documentclass[acmsmall, screen, nonacm]{acmart}

\newbool{extendedversion}
\setbool{extendedversion}{true}
\newbool{reviewversion}
\setbool{reviewversion}{false}

\setcopyright{cc}
\setcctype{by-nc-nd}
\acmDOI{10.1145/3808306}
\acmYear{2026}
\acmJournal{PACMPL}
\acmVolume{10}
\acmNumber{PLDI}
\acmArticle{228}
\acmMonth{6}
\acmSubmissionID{pldi26main-p402-p}
\received{2025-11-13}
\received[accepted]{2026-04-03}

\citestyle{acmnumeric}

\usepackage[nameinlink]{cleveref}
\usepackage{tikz}
\usetikzlibrary{positioning,tikzmark}
\usepackage{annotate-equations}
\usepackage{multirow}
\usepackage{subcaption}
\usepackage{wrapfig}
\usepackage{listings}
\usepackage{lst-ocaml}
\usepackage{refable-rules}
\usepackage{soteria-macros}
\usepackage{proof-macros}

\lstset{language=OCaml, style=ocamlstyle}

\crefname{section}{}{}
\crefname{subsection}{}{}
\creflabelformat{section}{#2Sec.~#1#3}
\creflabelformat{subsection}{#2\S#1#3}




\begin{document}

\title{Soteria: Efficient Symbolic Execution as a Functional Library\ifbool{extendedversion}{\ifbool{reviewversion}{}{ (Extended Version)}}{}}
\subtitle{Perhaps you \emph{should} write your own symbolic execution engine!}

\author{Sacha-Élie Ayoun}
\orcid{0000-0001-9419-5387}
\affiliation{%
  \institution{Imperial College London}
  \city{London}
  \country{United Kingdom}
}
\affiliation{%
  \institution{Soteria Tools Ltd.}
  \city{London}
  \country{United Kingdom}
}
\email{s.ayoun17@imperial.ac.uk}

\author{Opale Sjöstedt}
\orcid{0009-0003-7545-1383}
\affiliation{%
  \institution{Imperial College London}
  \city{London}
  \country{United Kingdom}
}
\affiliation{%
  \institution{Soteria Tools Ltd.}
  \city{London}
  \country{United Kingdom}
}
\email{opale.sjostedt23@imperial.ac.uk}

\author{Azalea Raad}
\orcid{0000-0002-2319-3242}
\affiliation{%
  \institution{Imperial College London}
  \city{London}
  \country{United Kingdom}
}
\affiliation{%
  \institution{Soteria Tools Ltd.}
  \city{London}
  \country{United Kingdom}
}
\email{azalea.raad@imperial.ac.uk}


\ifbool{extendedversion}{}{
\keywords{Symbolic Execution, Rust, Functional Programming, Incorrectness Logic}
}

\begin{abstract}
Symbolic execution (SE) tools often rely on intermediate languages (ILs) to support multiple programming languages, promising reusability and efficiency. 
In practice, this approach introduces trade-offs between performance, accuracy, and language feature support. 
We argue that building SE engines \emph{directly} for each source language is both simpler and more effective. 
We present \sot, a lightweight \ocaml library for writing SE engines in a functional style, without compromising on performance, accuracy or feature support. 
\sot enables developers to construct SE engines that operate directly over source-language semantics, offering \emph{configurability}, compositional reasoning, and ease of implementation. 
Using \sot, we develop \rusteria, the \emph{first} Rust SE engine supporting \TBs (the intricate aliasing model of Rust), and \coteria, a compositional SE engine for C. 
Both tools are competitive with or outperform state-of-the-art tools such as Kani, \pulse, CBMC and Gillian-C in performance and the number of bugs detected. 
We formalise the theoretical foundations of \sot and prove its soundness, demonstrating that sound, efficient, accurate, and expressive SE can be achieved without the compromises of ILs. 
\end{abstract}

\begin{CCSXML}
<ccs2012>
   <concept>
       <concept_id>10003752.10003790.10003794</concept_id>
       <concept_desc>Theory of computation~Automated reasoning</concept_desc>
       <concept_significance>500</concept_significance>
       </concept>
   <concept>
       <concept_id>10003752.10003790.10011192</concept_id>
       <concept_desc>Theory of computation~Verification by model checking</concept_desc>
       <concept_significance>300</concept_significance>
       </concept>
   <concept>
       <concept_id>10003752.10010124.10010138.10010143</concept_id>
       <concept_desc>Theory of computation~Program analysis</concept_desc>
       <concept_significance>500</concept_significance>
       </concept>
 </ccs2012>
\end{CCSXML}
\ifbool{extendedversion}{}{
\ccsdesc[500]{Theory of computation~Automated reasoning}
\ccsdesc[300]{Theory of computation~Verification by model checking}
\ccsdesc[500]{Theory of computation~Program analysis}
}




\maketitle


\section{Introduction}

In the last two decades, a number of symbolic execution (SE) tools were designed with the intent of supporting \emph{multiple languages}, including \infertool~\cite{infer-automatic-program-calcagno-2011,finding-real-bugs-le-2022} for Java, C, Hack, and Python, 
Gillian~\cite{gillian-part-multilanguage-fragososantos-2020,gillian-part-ii-maksimovic-2021} for JavaScript, C, and Rust~\cite{hybrid-approach-semiautomated-ayoun-2025}, as well as others~\cite{viper-verification-infrastructure-muller-2016,klee-unassisted-automatic-cadar-2008,owi-performant-parallel-andres-2024}.
To do this, each such tool relies on an \emph{intermediate language} (IL) and offers an SE engine
that can execute IL code. 
A putative advantage of having an IL is that one can avoid re-implementing an SE engine for each new language L: instead, one can obtain it by \emph{compiling} L to IL. 
In this work, we argue that one \emph{should} implement an SE engine for each language and start  
by addressing two arguments commonly made in favour of ILs.

\paragraph{Argument 1: Design once, use often!}
This oft-touted mantra of ILs refers to their reusability when adding support for a new language. 
In practice, however, designing both an efficient and sufficiently expressive IL is far from straightforward, 
and it requires \emph{foresight} to predict all possible features in future languages. 
In our experience, this then results in either a bloated IL that is \emph{inefficient}, or an \emph{incomplete} IL that cannot support the desired features. 
In the latter case,  one can either revisit the IL to extend it with the required features, defeating the goal of `design once'; forgo supporting these features altogether, resulting in incomplete reasoning; or extend the compiler with complex pre-analyses to offset the lack of support in the IL, resulting in \emph{ad hoc solutions}.

Practically all three of these IL-related issues---inefficiency, incompleteness, and ad hoc solutions---are illustrated in the various attempts of existing tools to 
analyse Rust using ILs, such as CBMC's GOTO~\cite{cbmc-bounded-model-kroening-2014} (used by Kani~\cite{kani-verifying-dynamic-trait-vanhattum-2022})
or GIL (used by Gillian-Rust~\cite{hybrid-approach-semiautomated-ayoun-2025}). 
For example, GOTO does not support detecting uninitialised memory accesses, which are undefined behaviours (UBs), and so the Kani compiler performs an entire points-to analysis to enable this detection, which still remains incomplete. Further, detecting aliasing violations as per the intricate \TBs~\cite{tree-borrows-villani-2025} aliasing model of Rust remains entirely beyond the reach of Kani. 
On the other hand, Gillian-Rust encodes Rust pointers and mutable borrows in creative ways because GIL does not have the constructs to express these concepts naturally, hindering the expressivity and efficiency of the analysis.

Note that although using an \emph{existing} IL (\eg LLVM-IR or Wasm) eschews the need for a dedicated compiler, it often leads to \emph{inaccurate} analyses. 
This is because existing compilers are not \emph{semantics-preserving}~\cite{c11-optimasations-are-invalid} and erase UBs and other information deemed irrelevant to the IL. 
As such, analysing the resulting IL code may lead to \emph{false negatives}. 
For instance, Wasm does not encode any information required for reasoning about \TBs, and hence Owi~\cite{owi-performant-parallel-andres-2024}, an SE engine with Wasm as its IL, cannot find aliasing bugs. 
Moreover, compilation causes Owi's IL code to grow substantially in size, which is likely to make its Rust analysis less efficient. 
All of this leads us to our first claim:

\smallskip
\noindent\textbf{Claim 1}: \emph{IL-design is a trade-off between performance, accuracy, and comprehensive support for language features, and one cannot maximise all aspects of this trio at once. A dedicated analysis can.}

\paragraph{Argument 2: Prove once, use often!} From the meta-theoretical point of view, an IL-based SE engine can be proven sound against its IL once and for all, and then all existing and future analyses built over the IL should inherit its soundness guarantees. 
This, however, makes a \emph{significant assumption}, namely that the source-to-IL compiler is also sound. 
Not only is this often not the case in practice~\cite{miri-practical-undefined-behavior-jung-2026}, verifying (proving sound) a compiler designed for analysis is a highly challenging task~\cite{trustworthy-automated-program-parthasarathy-2024}.
Moreover, as discussed above, compilers often erase UBs and other information unsupported or deemed irrelevant by the IL, leading to gaps in expressivity or even \emph{false negatives} in the analysis (\eg such false negatives occur in Gillian-C, see \cref{sec:soteria-c}). 
This brings us to our second claim:

\smallskip
\noindent\textbf{Claim 2}:
\emph{A trustworthy interpreter for \textnormal{L} is not harder to write than a trustworthy \textnormal{L}-to-\textnormal{IL} compiler.}

\paragraph{The \sot Framework}
Motivated by our two claims above, we present \emph{\sot}, a simple, yet powerful, \ocaml library for writing SE engines. 
A key novelty of \sot is that it abstracts the fixed-IL aspect of the analysis, allowing one to construct an SE engine that operates \emph{directly} on the source language or on an IL close to the source-language semantics. 
\sot provides a \emph{monadic interface}, together with intuitive syntactic sugar, to write SE engines in functional style. 
To show the utility of \sot, we build two SE engines over it: \rusteria, our automated symbolic testing engine for Rust; and \coteria, our automated, compositional, bi-abduction engine for C (see below). 
Using these two case studies, we show the advantages of \sot along several dimensions.

First, we show that \sot yields \emph{highly performant} engines comparable or better than the state of the art. 
For example, \rusteria competes in performance with Kani (a Rust SE engine maintained by a team of AWS engineers over the last four years)~\cite{kani-verifying-dynamic-trait-vanhattum-2022}, while covering a significantly larger subset of the Rust semantics (see \cref{sec:soteria-rust}).
To our knowledge, \rusteria is the \emph{first} SE engine to support detecting bugs related to the \TBs aliasing model~\cite{tree-borrows-villani-2025}, while \emph{no existing verification tool}~\cite{leveraging-rust-types-astrauskas-2019,creusot-foundry-deductive-denis-2022,refinedrust-type-system-gaher-,verus-verifying-rust-lattuada-2023,hybrid-approach-semiautomated-ayoun-2025,flux-liquid-types-lehmann-2022,aeneas-rust-verification-ho-2022,modular-formal-verification-foroushaani-2022} supports \TBs.
Notably, using \rusteria we detected and reported a potential bug in \myinline{hashbrown}~\cite{hashbrown-2025}, the \emph{second-most-downloaded} Rust crate~\cite{crates-io-stats} (see \cref{sec:soteria-rust}).

Second, we show that building SE engines over \sot is simple: as presented here, \rusteria took a first-year PhD student \emph{merely eight months} to implement.
This is partly due to the simple monadic interface of \sot that allows one to write SE engines in a functional style that closely resembles a concrete interpreter for the source language, and partly
because \sot provides a library of \emph{reusable components} that can be used across engines.
Indeed, the student re-used many of the components already developed for \coteria in \rusteria.

Finally, we formalise the soundness guarantees of \sot and the engines built over it. 
Specifically, following previous work~\cite{gillian-foundations-implementation-ayoun-2024}, we \emph{formalise} the theoretical foundation of \sot and prove that the symbolic interpreters built over it are sound as long as they meet certain conditions (\cref{sec:formalisation}).

We further highlight that \sot is \emph{highly flexible and configurable} in three ways. 
First is the choice of the \emph{source language}: clients of \sot can build analyses over it for any language L, so long as they develop an interpreter for L. 
More notably, they inherit the \emph{soundness guarantees} of \sot by construction, provided they meet the conditions of \cref{sec:formalisation}. 

Second, \sot is designed to come with \emph{batteries included} and provides out-of-the-box support for logging and statistics.
Indeed, logs in \sot are often more informative than those in IL-based frameworks, as they relate directly to the relevant source code rather than to the IL.

Third, inspired by \citewithauthor{compositional-symbolic-execution-loow-2024}, \sot can be \emph{configured} to perform analysis in one of two \emph{modes}: \emph{under-approximate} (\mux) analysis for \emph{bug catching}~\cite{incorrectness-logic-ohearn-2019,local-reasoning-presence-raad-2020} (as in Pulse~\cite{finding-real-bugs-le-2022}), or \emph{over-approximate} (\mox) analysis for \emph{bounded verification} (as in \eg CBMC).
Specifically, the analysis mode $m$ is passed to the \emph{SE monad} at the core of \sot, which determines the execution path exploration strategy \ala $m$ (exploring \emph{all} paths in \mox and \emph{some} paths up to a depth in \mux). 
Importantly, our monadic approach is also compatible with \emph{unbounded} verification using compositional SE~\cite{verified-symbolic-execution-keuchel-2022,gillian-foundations-implementation-ayoun-2024}.

Lastly, \sot is flexible enough to implement various analyses as long as they can be expressed as symbolic execution. Bounded whole-program symbolic testing (WPST) is one such analysis (as in CBMC~\cite{cbmc-bounded-model-kroening-2014}); bug-finding by means of under-approximate \emph{bi-abduction} (as in \pulse~\cite{finding-real-bugs-le-2022}) is another (as per~\cite{javert-20-compositional-fragososantos-2019,compositional-symbolic-execution-loow-2024,compositional-symbolic-execution-next-loow-2025}).
For instance, \coteria supports \emph{both} WPST and fully automatic bi-abductive bug detection. 
As we show in \cref{sec:soteria-c}, WPST in \coteria is competitive in performance with CBMC.
Moreover,  its compositional capabilities are competitive in performance with the industry-grade \infertool.\pulse (henceforth \pulse) tool, while being \emph{an order of magnitude faster} than Gillian-C~\cite{gillian-part-ii-maksimovic-2021}. 

\paragraph{Contributions and Outline}
Our contributions, described intuitively in \cref{sec:overview}, are as follows.
\begin{enumerate}
	\item[(\cref{sec:implem})] We present \sot, a functional library for building \emph{automated, industry-scale} SE engines. 
	\item[(\cref{sec:soteria-rust})] We describe \rusteria, our Rust SE engine that outperforms the state of the art in bug finding capabilities,  and is the \emph{first} SE engine to support \TBs. 
	\item[(\cref{sec:soteria-c})] We present \coteria, our SE engine for C that supports \emph{both} WPST \emph{and} fully automated bug detection using bi-abduction.  \coteria is comparable to the industry-grade \pulse tool and outperforms Gillian-C (factor of 2) and CBMC (by an order of magnitude).
	\item[(\cref{sec:formalisation})] We formalise the theory behind \sot and prove the soundness of its guarantees.
	\item[(\cref{sec:discussions})] We discuss key design choices we made in \sot and their limitations. 
	\item[(\cref{sec:relwork})] We conclude with a discussion of related work.
\end{enumerate}

\section{Overview: Build Your Own Symbolic Execution Engine for \simlang over \sot}\label{sec:overview}
\begin{figure}[t]
\centering
\includegraphics[width=0.8\textwidth]{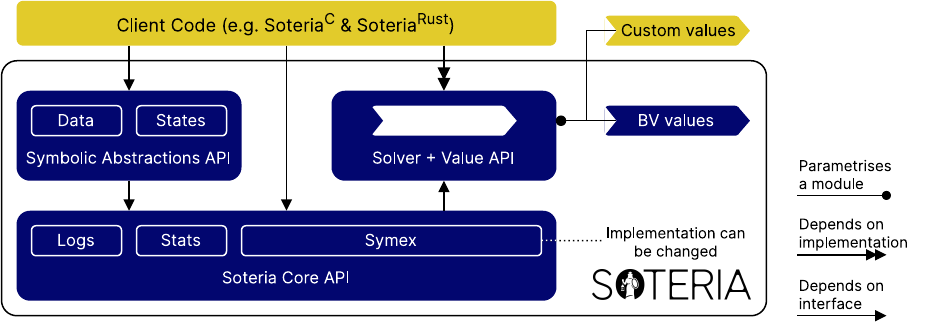}
\vspace{-10pt}
\caption{Overview of the \sot architecture}
\vspace{-10pt}
\label{fig:architecture}
\end{figure}

We depict an overview of the \sot architecture in \cref{fig:architecture}.
\sot is an \emph{efficient}, functional programming library written in \ocaml that provides:
\begin{itemize}
\item a monadic interface with all required primitives to write a symbolic execution (SE) engine;
\item an efficient implementation of this interface, parametrised on a user-defined value language;
\item a pre-defined instantiation of the value language for convenience;
\item pragmatic, developer-friendly tooling for aggregating statistics and logs; and
\item ready-made data structures for SE that can be soundly used within symbolic computations.
\end{itemize}

We present a didactic overview of \sot as a tutorial on using it to implement a \emph{symbolic interpreter}. We do this for a simple side-effect-free expression language, \simlang.
Our aim is not to give an exhaustive account of \sot's features, but rather an intuitive account of using it.
For lack of space we elide the details unnecessary for understanding the intuition behind \sot.
Nevertheless, we present the full details in the \ifbool{extendedversion}{appendix (\cref{app:simple-lang})}{extended version~\cite{extended-version}}.
The \simlang syntax\ifbool{extendedversion}{ (in \cref{app:simple-lang})}{} is standard and comprises the expected constructs of let-binding, if-else branching, and so forth.

\paragraph{Preliminiary: Custom Let-Operators in \ocaml} \sot makes extensive use of \ocaml's \emph{custom let-operators}, which (though daunting at first glance) are simply syntactic sugar for a monadic \myinline{bind} operator. Specifically,  given a monad \myinline{M}, one can define \myinline{let ( let* ) = M.bind}, whereafter \ocaml will desugar each subsequent occurrence of \myinline{let* x = e1 in e2} to \myinline{M.bind e1 (fun x -> e2)}.

\begin{wrapfigure}{r}{0.47\textwidth}
\vspace{-1em}
\begin{minipage}{0.47\textwidth}
\begin{lstlisting}
let ( let* ) = ExecutionMonad.bind
let eval subst expr =
match expr with
| If (cond, then_e, else_e) ->
  let* cond_v = eval subst cond in
  let* cond_b = bool_of_val cond_v in
  if cond_b then eval subst then_e
            else eval subst else_e
\end{lstlisting}%
\end{minipage}
\vspace{-1.5em}
\end{wrapfigure}
\paragraph{Preliminary: Concrete Monadic Interpreter}
In the snippet across we define a simple \emph{concrete monadic interpreter} for the \myinline{If} construct of \simlang.
We define the interpreter using \ocaml syntax and introduce concepts required to later understand \sot.
First, we assign the monadic bind operator of the \myinline{ExecutionMonad} (defined in \ifbool{extendedversion}{appendix \cref{app:simple-lang}}{the extended version~\cite{extended-version}}) to the \myinline{let*} operator.
Effectively, \myinline{let* x = e1 in e2} executes the monadic computation \myinline{e1}, which may be non-deterministic and may error (\ie it returns a set of results that are either values or errors).
Each error result of \myinline{e1} is propagated to the set of final results;
for each successful (non-error) result $v$, it continues by executing \myinline{e2} with \myinline{x} bound~to~$v$.

When interpreting the \myinline{If} construct, we first evaluate the guard \myinline{cond}; once again it may error and yield multiple values.
For each successful result \myinline{cond_v}, we convert it to an \ocaml boolean \myinline{cond_b} using \myinline{bool_of_val}, which may also error if \myinline{cond_v} is not a boolean (\ie if the evaluated program is ill-typed).
Finally, if \myinline{cond_b} is true, we evaluate expression \myinline{then_e}; otherwise, we evaluate  \myinline{else_e}.

Note that this interpreter, while written in \ocaml \emph{syntax}, might not be computable: evaluating an expression may result in infinitely many results, \eg evaluating \texttt{NondetInt} can yield \emph{any} integer.
For instance,  the program below yields the (infinite) set of results
$\{(\ook, x) \mid x  >5 \} \cup \{(\oerr, x) \mid x \leq 5\}$.

\begin{lstlisting}
Let x = NondetInt in (If (x > 5) then Ok x else Error x)
\end{lstlisting}

\begin{wrapfigure}{r}{0.48\textwidth}
\begin{minipage}{0.48\textwidth}
\vspace{-1em}
\begin{lstlisting}[escapeinside={&}{&}]
&\tikzmark{sym-let-start}&let ( let* ) = Symex.Result.bind&\tikzmark{sym-let-end}&
let eval subst expr =
match expr with
| If (cond, then_e, else_e) ->
  let* cond_v = eval subst cond in
  let* cond_b = bool_of_val cond_v in
  if&\tikzmark{sat-start}&%sat&\tikzmark{sat-end}& cond_b then eval subst then_e
                else eval subst else_e
\end{lstlisting}%
\end{minipage}
\newcommand{\thisfigureletxoffset}{-.3em}
\newcommand{\thisfigureletyoffset}{-48.6em}
\newcommand{\thisfiguresatxoffset}{-.3em}
\newcommand{\thisfiguresatyoffset}{-50.6em}
\tikz[overlay,remember picture]{
  \draw[fill=softgreen!80,draw=none,opacity=.3]
    ($(pic cs:sym-let-start) + (-.2em,.9em) + (\thisfigureletxoffset, \thisfigureletyoffset)$) rectangle ($(pic cs:sym-let-end) + (.2em,-.4em) + (\thisfigureletxoffset, \thisfigureletyoffset)$);
}%
\tikz[overlay,remember picture]{
  \draw[fill=softgreen!80,draw=none,opacity=.3]
    ($(pic cs:sat-start) + (0em,.9em) + (\thisfiguresatxoffset, \thisfiguresatyoffset)$) rectangle ($(pic cs:sat-end) + (0em,-.4em) + (\thisfiguresatxoffset, \thisfiguresatyoffset)$);
}%
\vspace{-1.9em}
\end{wrapfigure}
\paragraph{Symbolic Interpreter with \sot}
Lifting our concrete interpreter above to a \emph{symbolic} one is straightforward, as shown across.
It is almost identical to the concrete interpreter, except that \begin{enumerate*}
\item \myinline{let*} is now bound to the \myinline{Symex.Result.bind} operator,  sequencing \emph{symbolic computations}; and
\item instead of branching on a concrete boolean using \myinline{if}, we branch on a \emph{symbolic} one using \myinline{if\%sat}.
\end{enumerate*}

A symbolic computation is a computation that depends on \emph{symbolic variables}, which represent unknown values.
A symbolic computation yields a set of results; each result is associated with a \emph{path condition}, a symbolic boolean (boolean expression with symbolic variables) constraining the symbolic variables under which the result occurs.
For instance, consider the example above.
During SE, \myinline{NondetInt} will yield a \emph{single branch} where \myinline{x} is bound to a fresh symbolic variable $\symb{x}$ representing an unknown integer, with the path condition $\vtrue$ (\ie with no constraint on $\symb{x}$).
Indeed, this is a key advantage of SE: infinitely many branches of concrete execution can be represented by a finite number of symbolic branches with symbolic variables.
The \myinline{Symex.bind} operator sequences symbolic computations by simply threading the path conditions appropriately within each branch.

The \myinline{if\%sat} construct (which is syntactic sugar for the \sot \myinline{Symex.branch_on} primitive -- see \cref{sec:implem}), lifts the concrete notion of conditional branching to the symbolic world.
Its condition is a \emph{symbolic} boolean (rather than a concrete one), \ie a boolean expression that may depend on symbolic variables.
It then behaves as expected for SE.
First it checks if the guard ($\symb{x} > 5$ in the example above) is satisfiable; if so, then it executes the \myinline{then} branch.
Next, it also checks if the negation of the guard ($\symb{x} \leq 5$ in the example above) is satisfiable; if so, then executes the \myinline{else} branch.
As such, the SE of the example above returns two branches:
\begin{enumerate*}
	\item $\braket{\ook: \symb{x}}{\symb{x} > 5}$, the (then) branch resulting in the successful result $\ook: \symb{x}$ with path condition $\symb{x} > 5$;
	\item $\braket{\oerr: \symb{x}}{\symb{x} \leq 5}$, the (else) branch resulting in the error result $\oerr: \symb{x}$ with path condition $\symb{x} \leq 5$.
\end{enumerate*}

\paragraph{The \sot Symex Module}
The symbolic interpreter described above is implemented using our \myinline{Symex} module in \sot.
In addition to the \myinline{let*} and \myinline{if\%sat}, the \myinline{Symex} module provides a number of primitives that enable users to write \emph{efficient} symbolic interpreters from scratch with \emph{minimal effort}.
These primitives include \eg \myinline{nondet} for introducing fresh symbolic variables, or \myinline{branches} for creating multiple branches from a list of computations. The latter is useful for modelling real non-determinism (\eg allocation that may succeed or fail). Finally, \myinline{Symex} provides the \myinline{run} function that receives three arguments:
\begin{enumerate*}
	\item a symbolic computation;
	\item an `execution mode' that determines whether the analysis is \emph{over-approximate} (OX) or \emph{under-approximate} (UX) (inspired by~\cite{compositional-symbolic-execution-loow-2024}); and
	\item an optional \emph{fuel} to limit the breadth and depth of execution (infinite by default). It then executes the computation and returns a list of results, each paired with its path condition.
\end{enumerate*}

\sot provides pre-defined common data structures (\myinline{Data} module) and state monad transformers (\myinline{State} module) that can be soundly used within symbolic computations, enabling code sharing between various interpreters.
Through this reusability and its \myinline{Symex} module,  \sot simplifies the task of developing \emph{efficient} SE engines.
For instance, as we show in \cref{sec:soteria-rust}, this enabled a \emph{first-year} PhD student to develop a highly efficient Rust SE engine that is competitive in performance to the state-of-the-art tool Kani~\cite{kani-verifying-dynamic-trait-vanhattum-2022} by AWS (developed and maintained by \emph{seasoned industry engineers} in the last \emph{four years}),  while covering a substantially larger subset of Rust, including \TBs~\cite{tree-borrows-villani-2025}.

\paragraph{Symbolic Values and Solvers}
\sot is \emph{highly general} in that users can obtain a \emph{bespoke \myinline{Symex} module tailored} to their needs.
Specifically, \sot provides a \emph{functor}, \myinline{Soteria.Symex.Make}: given a module implementing the \sot \myinline{Solver} interface, it produces a \myinline{Symex} module tailored to the symbolic value language that the solver can reason about.
In the case of \simlang above, symbolic values need only to range over integers and booleans;
as such, the symbolic value language supplied to \myinline{Soteria.Symex.Make} only needs to implement these two sorts, as well as operations that may cause branching, \eg integer comparison. Given such a solver, let us call it \myinline{Simple_solver}, the \myinline{Symex} module above can be simply obtained by a \emph{single line}: \myinline{module Symex = Soteria.Symex.Make (Simple_solver)}.

In practice, users of \sot may often want to re-use an existing solver implementation rather than implementing their own from scratch.
To this end, \sot comes with a pre-defined solver module \myinline{BV_solver} (described later in \cref{par:bv_values}) that implements a symbolic value language based on bit-vectors,  comes with several important \emph{optimisations}, and uses Z3 as its underlying SMT solver.
That is, while \sot affords its users the \emph{flexibility} to supply their own solvers, it also comes packaged with \emph{efficient, general purpose and ready-to-use} components out of the box.

\paragraph{Guarantees}
The \sot \myinline{Symex} monad provides important soundness guarantees, summarised as follows.
\begin{enumerate*}
  \item Primitives of the \myinline{Symex} monad are sound against their concrete counterparts; \eg \myinline{Symex.nondet} indeed abstracts over concrete non-determinism;
  \item sequential composition of sound symbolic computations is sound against the sequential composition of their concrete counterparts;
  \item the symbolic \myinline{if\%sat} construct is sound against the concrete \myinline{if} construct.
\end{enumerate*}
While the user must ensure they compose sound primitives that faithfully capture the concrete semantics of their interpreted language, we believe that doing so is no more difficult than writing a \emph{correct} compiler.

Finally,  \sot provides out-of-the-box support for logging and statistics.
Logging in \sot is directly integrated in SE: users can log messages at any point of execution, and \sot can create an HTML report that groups logs by execution path (see \ifbool{extendedversion}{\cref{app:logging-impl}}{extended version~\cite{extended-version}} for an example).

\section{Implementation}\label{sec:implem}


Recall from \cref{sec:overview} that \sot comprises several components as depicted in \cref{fig:architecture}. 
In what follows, we detail the implementation of the \myinline{Solver}, \myinline{Symex}, and \myinline{Data} modules.  
We highlight several \emph{optimisations} we have implemented in \sot, and we signpost them clearly with labels \displayoptimcounter{i}.
These optimisations, while minor in isolation, have been carefully designed to work in synergy and have a significant impact on the overall performance of symbolic execution.

\subsection{Solver and Values}

Recall that symbolic execution (SE) uses symbolic variables to abstract over sets of values.
As discussed in \cref{sec:overview}, \sot is parametric in its choice of symbolic variables.
Specifically,  instantiating symbolic variables in \sot
requires implementing the \myinline{Value} and \myinline{Solver} modules. 
The \myinline{Value} module describes the carrying type of symbolic values (or symbolic expressions), while the \myinline{Solver} module describes a stateful object to which constraints (\ie boolean symbolic values) can be added, and which has a \myinline{check_sat} function checking whether the current set of constraints is satisfiable.

These two modules have clearly defined interfaces, and the arrows in \cref{fig:architecture} indicate that \myinline{Symex} depends on the \emph{interfaces} of \myinline{Solver} and \myinline{Value}, while clients of \sot may additionally depend on the \emph{implementation} of \myinline{Value}, which is necessary to construct symbolic values in the first place. 

\paragraph{The Value Interface} The value interface (slightly simplified here for clarity) exposes a minimal set of operations and types required to manipulate symbolic values during SE.
It exposes two types: 
\begin{enumerate*}
	\item \myinline{t}, carrying values (a popular \ocaml convention is to name the main type of a module as `\myinline{t}'), and
	\item \myinline{ty}, carrying physical representations of the type of symbolic values.
\end{enumerate*}
The interface also exposes \myinline{mk_var : Var_id.t -> ty -> t} to create symbolic variables from a variable identifier and a type; \myinline{not : t -> t} to negate symbolic booleans, necessary for defining \myinline{if\%sat} where both the guard and its negation must be checked for satisfiability; and \myinline{as_bool : t -> bool option} to concretise a symbolic value as a concrete boolean where possible (and otherwise returning \myinline{None}). 
The interface also contains operations for pretty-printing and variable substitution (omitted for brevity). 

%
%
\paragraph{The Solver Interface} The solver interface defines a `solver' of type \myinline{t} as an object to which constraints (boolean symbolic values) can be added using \myinline{add_constraints}. 
One can check whether the current state satisfies these constraints using \myinline{check_sat}, which returns $\sat$, $\unsat$ or $\unknown$ (the \myinline{Solver_result.t} type). 
The current state also records the symbolic variables currently in scope, allowing to create \emph{fresh} variables using \myinline{fresh_var}. 
A solver must be \emph{incremental}: it must support saving the current state using \myinline{save}, backtracking the last \myinline{n} checkpoints using \myinline{backtrack_n}, and resetting to the initial state using \myinline{reset}. 
The solver must provide a way to copy the current state of the solver as a list of boolean symbolic values using \myinline{as_values}. 
Finally, it must provide the \myinline{simplify} procedure to simplify a given symbolic value according to the current set of constraints; this is particularly important to enable optimisations in the SE monad, as we discuss below.

\begin{lstlisting}
module type Solver = sig
  module Value : Value.S
  type t
  val add_constraints : t -> Value.(sbool t) list -> unit
  val check_sat : t -> Solver_result.t
  val fresh_var : t -> 'a Value.ty -> Var_id.t
  val save : t -> unit
  val backtrack_n : t -> int -> unit
  val reset : t -> unit
  val as_values : t -> 'a Value.t list
  val simplify : t -> 'a Value.t -> 'a Value.t
end
\end{lstlisting}

Note that this interface is already implemented by SMT solvers such as Z3 or CVC5, which support incremental solving via push/pop commands. As such, one can simply wrap such a solver directly behind this interface and use the solver expression language as the symbolic value representation. 
However, by keeping the \myinline{Value} opaque, we allow for arbitrary abstract implementations of symbolic values, which may enable simplifications that are more tailored to the SE use case. 
Similarly, using the \myinline{Solver} interface directly with an SMT solver is inefficient in practice, as SMT solvers are often designed to solve large batches of constraints at once, rather than incrementally adding small constraints. These two insights motivated the design of the \myinline{BV_solver} implementation.

\paragraph{BV\_values}\label{par:bv_values} 
The \myinline{Bv_Value} module \emph{implements} the \myinline{Value} interface, using bitvectors to represent most symbolic values. 
Specifically, a symbolic value's type is either a bitvector of a given size, an IEEE float of a given precision, a location (memory object identifier) represented as a bitvector, a pointer represented as a location-offset pair (both bitvectors) à la CompCert~\cite{compcert-memory-model-leroy-2012}, or a boolean. 
This is sufficient to represent all values required to symbolically execute C and Rust programs.

Our first optimisation (\newoptim{opt:hashcons}) is to hashcons~\cite{type-safe-modular-hash-consing-filliatre-2006} all values to reduce memory consumption, speed up syntactic equality checks, and enable efficient maps and sets of symbolic values. 
Our second optimisation (\newoptim{opt:reduce-on-constr}) consists in hiding the data representation of symbolic values to force clients to use constructors that eagerly perform simplifications, following work from \citewithauthor{qed-computing-what-correnson-2014}. 
For instance, constructing the addition of two concrete bitvectors immediately returns their sum as a concrete bitvector. 
Note that \optimref{opt:hashcons} and \optimref{opt:reduce-on-constr} synergise: simplifications ensure that certain values cannot be constructed in the first place, normalising them to a canonical form, hence increasing memory efficiency of the hashconsing table by reducing the number of distinct nodes.

\paragraph{BV\_solver} The \myinline{BV_solver} module implements the \myinline{Solver} interface using a pipeline composed of lightweight analyses inspired by abstract interpretation, manually optimised state management, and an underlying connection to the Z3 SMT solver.

\begin{wrapfigure}{r}{0.37\textwidth}
\vspace{-1.5em}
\begin{minipage}{0.37\textwidth}
\begin{lstlisting}
type t = {
  var_counter: Var_counter.t;
  analyses: Analysis.t;
  state: Solver_state.t;
  z3_conn: Z3_conn.t }
\end{lstlisting}
\end{minipage}
\vspace{-2em}
\end{wrapfigure}

The solver uses a variable counter to track which symbolic variables are currently in scope and to create fresh ones when requested.
In our next optimisation (\newoptim{opt:analysis}), when constraints are added using \myinline{add_constraints}, they are first passed to \myinline{Analysis.t}, which comprises a few \emph{lightweight analyses} (inspired by abstract domains) to store a subset of the current set of constraints \emph{efficiently}. 
For instance, our \emph{interval analysis} tracks upper and lower bounds on bitvector variables, allowing us to detect unsatisfiable constraints (\eg \myinline{0 < x && x < 0}) quickly without calling the SMT solver, or to simplify constraints such as \myinline{0 <= x <= 0} to \myinline{x = 0}.
Our \emph{equality analysis} tracks equalities between expressions and uses a cost function to then decide which expression to use as a canonical representative when simplifying constraints.

After going through the analyses, constraints that have not been fully stored are added to the \myinline{Solver_state.t}, which is a vector of constraints. 
Saving and backtracking the solver state is implemented by simply recording the size of this vector at each checkpoint.

\begin{wrapfigure}{r}{0.39\textwidth}
\vspace{-1em}
\begin{minipage}{0.39\textwidth}
\begin{lstlisting}
let process =
  let* () = Symex.assume c1 in
  let* () = Symex.assume c2 in
  if%sat c3 then ...
\end{lstlisting}
\end{minipage}
\vspace{-1.3em}
\end{wrapfigure}
Further, in our next optimisation (\newoptim{opt:check-mark}), each constraint in the solver state is marked as `checked' or `unchecked'. 
When \myinline{check_sat} is called, all unchecked constraints, together with all constraints that share variables with them, are sent to Z3 for checking; if Z3 returns $\sat$, then all constraints are marked as checked. 
This avoids sending the \emph{entire} set of constraints to Z3 at each call to \myinline{check_sat}, which would be very inefficient in practice. 
\optimref{opt:check-mark} allows us to add several constraints to the path condition using \myinline{assume} \emph{without} incurring the cost of a check when they are added.
To see this, consider the following code snippet on the right.

The two calls to \myinline{assume} simply add \myinline{c1} and \myinline{c2} to the solver state as unchecked constraints. When reaching the \myinline{if\%sat}, a checkpoint is created,  \myinline{c3} is added to the solver state and all three constraints are sent to the solver to check for satisfiability.
If the result is satisfiable, all three constraints are marked as checked. 
When backtracking to the checkpoint, \myinline{c3} is removed and \myinline{not c3} is added to the solver state. 
Since \myinline{c1} and \myinline{c2} are already marked as checked, only \myinline{not c3} and constraints sharing variables with it need to be sent to the solver, which may save us from checking \myinline{c1} and \myinline{c2} again.

Additionally, as part of our next optimisation (\newoptim{opt:cache}), we \emph{cache} all queries previously sent to Z3; 
in our Collections-C case study (\cref{sec:soteria-c}), this reduces Z3 calls by a factor of 3.8 and execution time by a factor of 3. 
This further synergises with \optimref{opt:hashcons}, since implementing the cache becomes cheap. It also synergises well with \optimref{opt:reduce-on-constr} as simplifications imply a higher probability of cache hits. 
In fact, we also add construction-time transformations of symbolic values that aim to optimise caching more than solving time, \eg by always choosing the same ordering of operands in commutative operations.

Finally, our last optimisation (\newoptim{opt:contextual-reductions}) uses a \myinline{simplify} function which leverages analyses, as well as the solver state, to perform \emph{contextual} simplifications. 
For instance, a constraint that is already part of the solver state is simplified to \myinline{true}, and its negation is simplified to \myinline{false}.
Surprisingly, such simple optimisations allow for significant reductions in calls to the SMT solver in practice.



We have also developed a lightweight implementation of the Solver interface without \optimref{opt:hashcons}--\optimref{opt:contextual-reductions}.
Our lightweight implementation is directly wrapped around Z3, using its push/pop functionality to manage the path condition. 
However, we found this lightweight approach to be orders of magnitude slower than \myinline{BV_solver} empirically.  
We leave a more in-depth comparison as future work.

\subsection{The Symex Module}
The \myinline{Symex} module provides the core primitives required to define SE. 
We write a clear interface for \myinline{Symex} so that different implementations of the SE monad can be swapped easily. 
For instance, one could design monads that explore branches in different orders (\eg depth- versus breadth-first exploration). 
We present the \sot \myinline{Symex} interface and our implementation of it. 

\paragraph{The Symex Interface}
The \myinline{Symex} interface exposes only core primitives required for efficient SE.
Specifically,  it exposes 
\begin{enumerate*}
	\item \myinline{return} and \myinline{bind}, the monadic operations for sequencing SE steps; 
	\item \myinline{branch_on}, the desugared function underpinning \myinline{if\%sat}; 
	\item \myinline{branches : 'a t list -> 'a t}, to explore a list of branches without conditions, required to model non-determinism, \eg allocation, which may return either a valid allocated object or a null pointer, and \myinline{val vanish: unit -> 'a Symex.t}, which stops exploration of the current branch;
	\item \myinline{val nondet : Value.ty -> Value.t t}, to instantiate an unconstrained symbolic value for a given type; and
	\item \myinline{run}, to run an SE monad to completion, returning a list of pairs, each comprising a final result and the path condition (as a list of constraints) leading to this result. 
\end{enumerate*}
As shown below, the \myinline{run} function also takes an optional \emph{fuel} parameter, infinite by default, to limit the number of branches or steps explored during SE, as well as a \emph{mode} parameter to choose between over- (\mox) and under-approximate (\mux) SE.

\begin{lstlisting}
val run:?fuel:Fuel.t -> mode:Approx.t -> 'a t -> ('a * Value.(sbool t)) list
\end{lstlisting}

While these are the only core primitives \emph{required} to define any symbolic computation, the \myinline{Symex} interface exposes additional operations either for convenience, or because they allow for more efficient implementations. 
These include \myinline{assume}, which adds a constraint to the path condition, and \myinline{assert}, which checks if a constraint holds and stops exploring the current branch if it does not. 
These two operations can be implemented using \myinline{if\%sat} as follows:
\begin{lstlisting}
let assume cond = if%sat cond then return () else vanish ()
let assert cond = if%sat cond then ok () else error AssertError
\end{lstlisting}
However, providing them as primitive operations allows for more efficient implementations that avoid exploring unnecessary branches. 
For instance, as discussed above, \myinline{assume} may simply add the constraint to the solver state without calling the solver immediately. 
It would then be the responsibility of the \myinline{run} function to check the satisfiability of the path condition when reaching a leaf of the SE tree, if not all constraints have been checked yet.

\paragraph{The Symex Implementation} 
The \myinline{Symex} monad should model non-determinism due to branching and attach a state, the path condition, to each branch of the execution.

\begin{wrapfigure}{r}{0.43\textwidth}
\vspace{-1em}
\begin{minipage}
{0.43\textwidth}
\begin{lstlisting}
let state = ref (Solver.create ())  
type 'a t = ('a -> unit) -> unit
       (* = 'a Iter.t *)
\end{lstlisting}
\end{minipage}
\vspace{-1em}
\end{wrapfigure}
In practice, there are several ways of implementing both non-determinism and state in \ocaml. We make two design choices to improve efficiency.
First, we utilise mutable states in \ocaml and implement the path condition as a mutable reference to a \myinline{Solver.t}. 
Second, we opt for \emph{depth-first} exploration of branches, allowing us to leverage the incremental reasoning capabilities of \sot presented above and to optimise memory sharing between branches. 
We do this by implementing the non-determinism monad using idiomatic \ocaml iterators, such that a symbolic computation is simply an iterator over the leaves of the SE tree. 
Specifically, a symbolic computation is a function that takes a continuation \myinline{f : 'a -> unit} and applies it to all results of type \myinline{'a} produced by the computation.
%

\paragraph{Implementing \myinline{branch_on}} 
\label{par:branch_on}
With our design choices above, implementing \myinline{branch_on} is straightforward. This primitive receives a guard, two symbolic computations for the `then' and `else' branches, and a continuation \myinline{f}. 
First, the guard is simplified using the solver's contextual simplification procedure; if simplified to \myinline{true} or \myinline{false}, we immediately execute the corresponding branch.
Otherwise,  we \begin{enumerate*}
\item save the current solver state;
\item add the guard to the path condition;
\item check the satisfiability of the path condition, and if satisfiable, we pass the continuation \myinline{f} to the `then' branch for execution;
\item backtrack the solver state to the saved state;
\item add the negation of the guard to the path condition,  and (analogously) if satisfiable, we pass \myinline{f} to the `else' branch for execution.
\end{enumerate*}
We present our implementation in detail in \ifbool{extendedversion}{appendix \cref{app:branching-impl}}{the extended version~\cite{extended-version}} for the interested reader.

\subsection{The Data Module}\label{sec:data}

We give a high-level overview of \myinline{Soteria.Data} as an example of a module that provides reusable components on top of the \myinline{Symex} interface. 
The \myinline{Data} module contains data structures that have been lifted to the symbolic world, together with operations lifted to symbolic computations.

\paragraph{Maps}
We focus on \myinline{Map}, a key-value map data structure which exists in the \ocaml standard library, but could not be used soundly with symbolic keys out of the box and requires a custom symbolic implementation. For instance, consider the \myinline{find_opt} operation that retrieves the value associated with a key in a map, and returns \myinline{None} if the key is absent.
\begin{lstlisting}
(* OCaml standard library *)
val find_opt : 'a Stdlib.Map.t  -> Key.t -> 'a option

(* Soteria.Data *)
val find_opt : 'a Soteria.Map.t -> Key.t -> 'a option Symex.t
\end{lstlisting}
  
Let us explain why the \lstinline{Stdlib} implementation of \myinline{find_opt} is unsound when used with symbolic keys, and how we address this issue in our \lstinline{Soteria.Data.Map} implementation. When retrieving the value associated with a key, the \lstinline{Stdlib} implementation finds an entry in the map whose key is \emph{syntactically} equal to the queried key (the same \ocaml value).
However, a symbolic computation may add a map entry for \emph{symbolic} key $\symb{x}$ (where $\symb{x}$ is a symbolic variable) and later attempt to retrieve the value of symbolic key $\symb{y}$ (a different variable, and hence a different \ocaml object). 
As $\symb{x}$ and $\symb{y}$ are different \emph{syntactically}, \myinline{find_opt} would return \myinline{None}.
However, this is unsound under path condition $\symb{x} = \symb{y}$: the two symbolic keys are equal, and thus the entry for $\symb{x}$ should be returned when querying for $\symb{y}$.
On the other hand, \myinline{Soteria.Data.Map} uses \emph{symbolic equality} (exposed by the \myinline{SymEq} module type) on keys and returns a symbolic computation (\myinline{Symex.t}) that queries the path condition to determine if two keys are equal. Executing this symbolic computation under a path condition that does not constrain the two key variable returns \emph{two branches}: one where the keys are equal and the entry is returned, and one where they are not and \myinline{None} is returned; in each branch, the path condition is updated with the corresponding constraint (\myinline{x = y} or \myinline{x != y}).

\begin{wrapfigure}{r}{0.43\textwidth}
\vspace{-1em}
\begin{minipage}{0.43\textwidth}
\begin{lstlisting}
let find_opt_sym map key =
  let rec find_bindings = function
    | [] -> Symex.return None
    | (k, v) :: tl ->
        if%sat Key.eq key k
        then Symex.return (Some v)
        else find_bindings tl
  in
  (* Syntactic lookup *)
  match M.find_opt key st with
  | Some v -> Symex.return (Some v)
  | None ->
    find_bindings (M.to_list map)
\end{lstlisting}
\end{minipage}
\vspace{-2em}
\end{wrapfigure}
\paragraph{Branching and Optimisation} In theory, this symbolic implementation of \myinline{find_opt} (on the right) may lead to an exponential blowup in the number of branches when querying for a key in a map with many entries, as each entry may be equal to the queried key or not. In practice, however, that is not the case, thanks to \begin{enumerate*}\item the optimisations described earlier in this section, \item constraints that are already in the path conditions when accessing such maps, and \item additional optimisations we have implemented in the \myinline{Data.Map} module\end{enumerate*}. In particular, when querying for a key, we first check if the key is syntactically equal to any of the keys in the map, which is a cheap check that may avoid branching altogether.

In general, this implementation illustrates how a call to \myinline{if\%sat} does not necessarily lead to branching. In fact, in the \rusteria example we describe later in \cref{sec:discussions} (where values are inserted into a \myinline{BTreeSet}), we record more than 300M calls to \myinline{if\%sat}, of which only 4683 branches are feasible.

\paragraph{Other Data Structures} In theory, any data structure from the \ocaml standard library can be adapted in this way and used soundly in \sot, so long as any of its operations that manipulate symbolic values are lifted to sound symbolic computations using the \myinline{Symex} interface.
For instance, we also provide a \myinline{Range} module to represent pairs of symbolic integers and reason about their ordering and inclusions.

\paragraph{Parametricity}
\myinline{Soteria.Data} also provides a series of module types (\ocaml's analogue of interfaces/typeclasses) for symbolic abstractions. For instance, it provides an interface for users to describe ``an integer'' or ``values that can be compared for equality'' (used in the \myinline{Map} implementation to characterise the type of keys). This is particularly useful since \sot leaves the choice of the representation of symbolic values to the user, and thus the \myinline{Data} module cannot assume \eg a specific representation of symbolic integers.
Note that the data structures in \myinline{Data} are parametrised by the choice of an SE monad \myinline{Symex} (itself parametrised by a solver and symbolic values).
As such, users may swap the choice of values, solvers, or even the order of branch exploration, \emph{without} having to reimplement these data structures.

\section{\rusteria: \sot for Rust}\label{sec:soteria-rust}

\begin{wrapfigure}[7]{r}{0.5\textwidth}
  \vspace{-10pt}
  \includegraphics[width=0.5\textwidth]{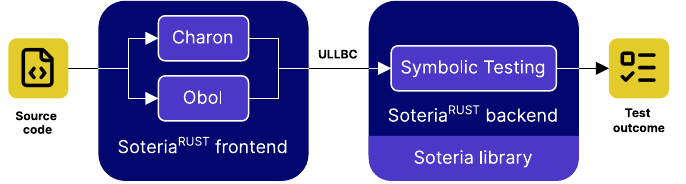}\vspace{-5pt}
  \caption{The \rusteria architecture}\label{fig:rust-pipeline}
\end{wrapfigure}
We instantiate \sot to develop \rusteria, a \emph{symbolic execution (SE) engine for Rust}.
As we show below, the performance of \rusteria, both in terms of speed and the bugs detected, is \emph{comparable or often better} than the state of the art tool Kani~\cite{how-open-source-thekaniteam-2023} and passes a very large fragment of the tests for Miri~\cite{miri-rust-team}, the reference (concrete) interpreter for Rust.
\rusteria targets more Rust features than Kani: unlike Kani, it accounts for \emph{\TBs}~\cite{tree-borrows-villani-2025}, the aliasing model of Rust.
Unlike Miri, \rusteria performs \emph{symbolic} rather than \emph{concrete} testing.
To our knowledge, \rusteria is the \emph{first} SE engine that supports \TBs.
We proceed with a description of the \rusteria engine (\cref{subsec:rusteria_engine}) followed by an in-depth evaluation of its performance (\cref{subsec:rusteria_evaluation}).

\subsection{The \rusteria Engine}
\label{subsec:rusteria_engine}
\rusteria operates on ULLBC (Unstructured Low-Level Borrow Calculus) code~\cite{aeneas-rust-verification-ho-2022}, a representation of MIR (Mid-level Intermediate Representation) designed with formal analysis in mind.

\begin{wrapfigure}{r}{0.46\textwidth}
\vspace{-1em}
\begin{lstlisting}
let exec_stmt stmt =
  match stmt.kind with
  | Nop -> ok ()
  | Assign (place, rval) ->
    let* ptr= resolve_place place in
    let* v = eval_rvalue rval in
    State.store ptr place.ty v
\end{lstlisting}
\vspace{-1.5em}
\end{wrapfigure}
As shown in \cref{fig:rust-pipeline}, \rusteria has two frontends for parsing Rust:
\begin{enumerate*}
  \item \emph{Obol}, our own front-end which directly hooks into the Rust compiler public API to extract ULLBC, and is tailored for efficiently extracting monomorphised code; and
	\item Charon~\cite{ho2025charonanalysisframeworkrust}, which is currently slower and supports a smaller surface of the Rust language, but supports polymorphic code and is therefore suitable for our planned future work on enabling polymorphic symbolic execution in \rusteria.
\end{enumerate*}
We also provide a Kani-compatible API, allowing users to run \rusteria as a drop-in replacement for Kani~\cite{how-open-source-thekaniteam-2023} and run (existing) Kani tests with \rusteria.

We design symbolic values by lifting \myinline{Bv_values} to Rust as defined in the \ifbool{extendedversion}{appendix (\cref{fig:rust-values} in \cref{app:rusteria})}{extended version~\cite{extended-version}}.
The \rusteria interpreter straightforwardly traverses the ULLBC AST from the found entry points (a \myinline{main} function, or functions annotated with \myinline{#[kani::proof]}), executing statements and evaluating expressions.
The execution is wrapped inside a state monad, itself wrapped in the \lstinline|Symex| monad of \sot, allowing the state and variable store to be threaded through SE.
The lifting above shows an excerpt of the interpreter for executing the \myinline{Nop} and \myinline{Assign} statements.

\paragraph{The \rusteria State}
The \rusteria interpreter is parametric on a module that implements the Rust \lstinline|State| interface, which defines the type of the state and pointers, as well as symbolic computations for interacting with the state using these pointers.

\ifbool{extendedversion}{\ifbool{reviewversion}{\newpage}{}}{}

\begin{wrapfigure}{r}{0.41\textwidth}
\vspace{-1em}
\begin{lstlisting}
type ptr = {
  ptr   : BV_Value.(ptr t);
  tag   : Tree_borrow.tag;
  align : BV_Value.(nonzero t);
  size  : BV_value.(int t);
}
type state = (* simplified *)
(loc, Tree_block.t * Tree_bor.t)
  Soteria.Data.Map.t
\end{lstlisting}
\vspace{-1.5em}
\end{wrapfigure}
Currently, \rusteria provides one state implementation.
Its pointers, defined across, are structures comprising a \myinline{BV_Value} pointer (a pair of symbolic location and offset, in the style of CompCert~\cite{compcert-memory-model-leroy-2012}), a \TB tag (all \TB-related terms are explained below), and symbolic values corresponding to the alignment and size of the allocation the pointer points to.
Keeping track of the alignment and size within the pointer itself enables us to efficiently check for various kinds of pointer arithmetic mistakes, without repeatedly retrieving that information from the state.
States (simplified for presentation) are symbolic maps (implemented using \myinline{Soteria.Data.Map}) from symbolic locations to memory objects.
Each memory object is a pair comprising a \emph{tree block} and a \TB.
A tree block (unrelated to \TB) is a data structure for efficient and automated reasoning about symbolic byte arrays, following the design of \citewithauthor{gillian-foundations-implementation-ayoun-2024}, and extended to annotate each byte with a \TB state.
The implementation of this state was greatly simplified by the data structures provided by \sot and the ability to define our own \ocaml data structures and types such as \TBs, \TB blocks and \TB tags.



\paragraph{\TBs}
%
%
%
%
With these data structures in place, implementing \TBs is surprisingly simple.
In the \TB model~\cite{tree-borrows-villani-2025}, reading from or writing to a byte with \TB state $s$ through a pointer with tag $t$ requires taking a transition in a state machine depending on $s$, $t$, and the structure of the tree $T$ of the \TB associated with the allocation.
We can do this in \ocaml with a single a pattern match over the state $s$ and the kind of transition (determined by $t$ and $T$) to determine the next state $s'$, when the operation is allowed;
otherwise (when it is not allowed), we raise an error.
By contrast, had we compiled Rust to an IL, we would have had to encode this entire logic in the IL as well, significantly increasing the complexity of its implementation.

\begin{wrapfigure}{r}{0.455\textwidth}
\vspace{-1em}
\begin{lstlisting}[numbers=left,xleftmargin=1.5em,escapechar=^]
fn main() {
  let mut root = 42;
  let ptr = &mut root as *mut i32;
  let (x, y) = unsafe {
    (&mut *ptr, &mut *ptr) };^\label{line:tb-create-x-y}^
  *x = 13;^\label{line:tb-use-x}^
  *y = 20; // UB: y is disabled^\label{line:tb-use-y-ub}^
}
\end{lstlisting}
\vspace{-1.5em}
\end{wrapfigure}
Consider the example across by \citewithauthor{tree-borrows-villani-2025}: two mutable references \myinline{x} and \myinline{y} are created on line~\ref{line:tb-create-x-y},  aliasing the same address.
This is forbidden in safe Rust, but bypassed here using raw pointers and the \myinline{unsafe} keyword.
When \myinline{x} is written to on line~\ref{line:tb-use-x}, \TBs dictate that the state of each modified byte is updated to \emph{disable} access through all other tags, including that of \myinline{y}.
As such, using \myinline{y} on line~\ref{line:tb-use-y-ub} triggers UB, which \rusteria correctly detects by checking the state of the tag associated with \myinline{y}.

\begin{wrapfigure}{r}{0.57\textwidth}
 \vspace{-1em}
\begin{lstlisting}[language=Rust,style=ruststyle]
#[rustc_intrinsic]
pub const unsafe fn copy_nonoverlapping<T>(
    src: *const T, dst: *mut T, count: usize)
\end{lstlisting}%
 \vspace{-0.3em}
\begin{lstlisting}[language=OCaml,style=ocamlstyle]
val copy_nonoverlapping :
  t:ty -> src:ptr -> dst:ptr ->
  count:BV_value.(int t) -> unit ret
\end{lstlisting}%
\vspace{-1.2em}%
\end{wrapfigure}
\paragraph{Intrinsics}
When designing SE engines, one common challenge is the need to support numerous functions that are built into the language in addition to all language features.
Rust is no exception and defines, at the time of writing~\cite{rust-intrinsics-2025}, a total of 223 \emph{intrinsics}.
These functions allow for all kinds of low-level operations such as floating point arithmetic or atomic manipulation.
Unfortunately, these intrinsics are rather unstable, which could constitute a maintenance burden over time without careful design.

To address this challenge in \rusteria, we use the Rust signatures of these intrinsics to automatically generate the corresponding \ocaml signatures, together with all the necessary boilerplate to integrate them into the interpreter.
We then lift these \ocaml signatures to SE computations, receiving symbolic arguments and returning symbolic computations (hidden behind the \myinline{ret} type).
This design ensures that the signatures are always in sync with their ever-evolving Rust definitions, and streamlines the addition of each new intrinsic.

Note that the developer still needs to implement the actual intrinsic itself, matching the generated interface. In most cases, the implementation is straightforward: the average implementation is 5.6 lines of \ocaml code across 165 implemented intrinsics, with the longest implementation being 35 lines of code. The challenge of handling intrinsics thus lies in keeping up to date with their ever-evolving signatures (which our design addresses) rather than in their implementation.

%
%
%


\subsection{Evaluation}
\label{subsec:rusteria_evaluation}
We evaluate \rusteria against Kani and Miri.
Miri is a concrete interpreter for Rust and can detect undefined behaviours (UBs).
It is the \textit{de facto} tool for testing Rust code for UBs and is used extensively by the Rust compiler team.
Both Kani and Miri come with their own test suites, and we thus evaluate all three tools on both Kani and Miri suites (\cref{subsubsec:kani_miri_suites}).
We further compare all three tools on a fresh test suite that is not biased towards either tool (\cref{subsubsec:ds_suite}).
Finally, we compare \rusteria and Kani on the tests of a real-world Rust library, \myinline{finetime}~\cite{finetime-van-woerkom-2025} (\cref{subsubsec:finetime_case_study}).
We run all tests on a MacBook Pro (M4 Pro, 14 cores, 32GB RAM).

\subsubsection{The Kani and Miri Test Suites}
\label{subsubsec:kani_miri_suites}
The two suites comprise 922 tests in total: 413 Kani (commit \texttt{6c2c4f0}) and 509 Miri (commit \texttt{ec775c1}) tests.
For the Kani suite, we run Kani as intended (with \myinline{//kani-flags} annotations in the tests); we run Miri by replacing  \myinline{#[kani::proof]} annotations with \myinline{#[test]} and assuming they are concrete (if any Kani built-in is reached, Miri gives up, as it does not support \emph{symbolic} execution). 
We run \rusteria with \begin{enumerate*}\item unlimited fuel, ensuring \emph{all} feasible paths are explored,\item Kani compatibility enabled, linking against the Kani library wrapper, and \item memory leak and aliasing checks disabled (as Kani does not check for them)\end{enumerate*}.

For the Miri suite, we run Miri as intended, and we run \rusteria as is, since Miri compatibility is built-in.
We run Kani with flags \myinline{uninit-checks} and \myinline{valid-value-checks} to enable uninitialised memory access and validity checks, ensuring a similar level of thoroughness to Miri and \rusteria.

Kani (\resp Miri) takes 18s (\resp 4s) to run its longest test, so we set a timeout of 23s (\resp 9s) for all three tools on the Kani (\resp Miri) suite.
%
We classify the test outcomes into four categories: \emph{pass} (where the test outcome matches the expected outcome), \emph{fail} (when the test outcome disagrees with the expected outcome), \emph{unsupported} (if the tool crashes or raises an error, or a feature is not supported), and \emph{timeout}.

\begin{figure}[t]
\centering\small
\centering\footnotesize
\renewcommand{\arraystretch}{1.05}
\begin{tabular}{l@{\hspace{2pt}}@{\hspace{2pt}}rrrr@{\hspace{12pt}}rrrr@{\hspace{12pt}}c}
\toprule
\multirow{2}{*}{Tool} & \multicolumn{4}{c}{Kani Suite} & \multicolumn{4}{c}{Miri Suite} & Total \\
& \#Pass & \#Fail & \#Unsup. & \#Timeout & \#Pass & \#Fail & \#Unsup. & \#Timeout & \% Pass \\ \midrule
Kani      & 413 & 0  & 0   & 0  & 207 & 106 & 190 & 6 & \textbf{67.2\%}  \\
Miri      & 236 & 8  & 169 & 0  & 509 & 0   & 0   & 0 & \textbf{80.8\%}  \\
\rusteria & 381 & 18 & 9   & 5  & 420 & 61  & 14  & 14 & \textbf{86.9\%}  \\ \bottomrule
\end{tabular} \vspace{10pt}\\
\hrule height 1pt \vspace{5pt}
\begin{subfigure}[b]{0.49\textwidth}
    \includegraphics[width=\textwidth]{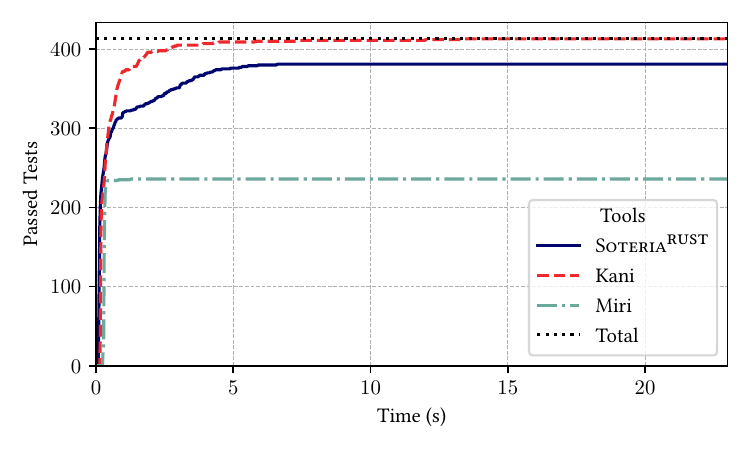}
    \vspace{-20pt}
    \caption{Kani suite}\label{fig:survival-graph-kani}
  \end{subfigure}
  \hfill
  \begin{subfigure}[b]{0.49\textwidth}
    \includegraphics[width=\textwidth]{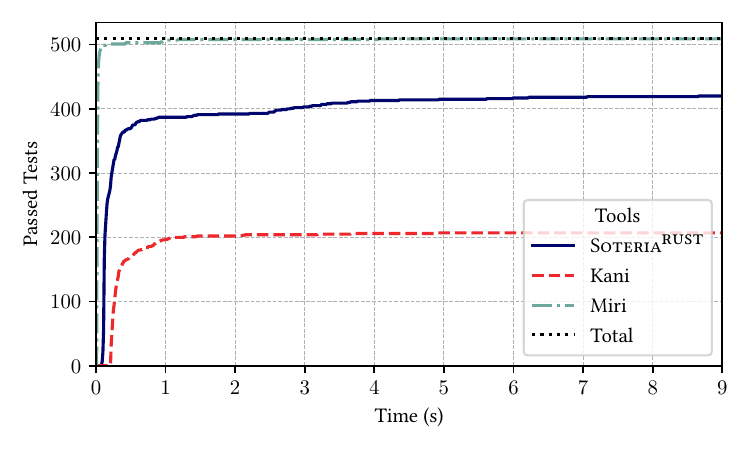}
     \vspace{-20pt}
    \caption{Miri suite}\label{fig:survival-graph-miri}
  \end{subfigure} \vspace{-5pt}
\caption{The Kani and Miri test suite results (above); number of tests passed over time by each tool (below)}
\label{tab:benchmark-res}
\label{fig:survival-graphs}
\end{figure}


We present the results for each test suite in \cref{tab:benchmark-res} (above), as well as the number of tests passed by each tool over time (survival graph) in \cref{fig:survival-graph-kani,fig:survival-graph-miri}.
Unsurprisingly, both Kani and Miri pass all tests in their respective suites;
for both suites, \rusteria performs the second best and far outperforms Kani (\resp Miri) on the Miri (\resp Kani) suite.
Moreover, \rusteria passes the highest number of tests across both suites (86.9\%).
\rusteria does not support 23 tests in total across both suites, due to gaps in the frontend (Obol does not support \myinline{TypeId}), and to unsupported features in the engine (\myinline{&dyn Trait} method calls where the first argument is \myinline{Self}, as it is unsized).
Additionally, \rusteria fails 79 tests, \ie it either detects an error where there is none, or misses an error.
These failures, which predominantly happen in the Miri suite, are due to UB behaviours not being checked, \eg not checking for address overlap between function arguments and return address, or not checking the validity of \myinline{&dyn Trait} upcasting. The failures in Kani are mostly due to trigonometric intrinsics (such as \myinline{cosf32} or \myinline{log10f32}) being implemented with insufficient precision, or to limitations around float operations (SMT-LIB does not handle NaN patterns accurately).

Thanks to \sot, a \emph{first-year PhD student} developed a competitive SE engine for Rust in merely eight person-months.
The limitations of Miri are due to its concrete nature and are hardly fixable, and those of Kani are due to its incomplete support for Rust semantics, which though fixable, require a significant engineering effort.
By contrast the limitations of \rusteria are due to missing features that can be added with little friction, given the modular design of \sot.

\begin{figure}[t]
\centering
\begin{subfigure}[b]{0.48\textwidth}
\begin{lstlisting}[language=Rust,style=ruststyle]
let x: u128 = kani::any();
let mut count: u32 = 0;
for i in 0..u128::BITS {
  let bit = x & (1 << i);
  if bit != 0 { count += 1; }
}

assert!(count == ctpop(x));
\end{lstlisting}\vspace{-5pt}
\caption{Original test from the Kani suite}\label{fig:ctpop-test}
\end{subfigure}
\hfill
\begin{subfigure}[b]{0.48\textwidth}
\begin{lstlisting}[language=Rust,style=ruststyle]
let x: u128 = kani::any();
let mut count: u32 = 0;
for i in 0..u128::BITS {
  let bit = x & (1 << i);
  count = count.wrapping_add(
    (bit != 0) as u32);
}
assert!(count == ctpop(x));
\end{lstlisting}\vspace{-5pt}
\caption{Modified test, verifiable by \rusteria}\label{fig:ctpop-test-fixed}
\end{subfigure}
\vspace{-5pt}
\caption{Example of a biased test in the Kani test suite, and a modification to make it verifiable by \rusteria}\label{fig:ctpop-tests}
\vspace{-10pt}
\end{figure}

\paragraph{Test Bias}
We note that both Kani and Miri suites are biased towards their respective tools.
For instance, consider the test for the \myinline{ctpop} intrinsic in the Kani suite, shown in \cref{fig:ctpop-test}, which counts the number of 1 bits in a given number \myinline{x}.
Both Kani and \rusteria pass this test, but the test is biased towards the strengths of Kani.
Specifically, despite the branch (if-condition) in the loop, CBMC easily merges the branches into a single path.
By contrast, \rusteria currently has no such branch merging capacity and branches on every loop iteration, leading to $2^{128}$ paths for a 128-bit integer.
However, as shown in \cref{fig:ctpop-test-fixed}, a simple modification (where we drop the `if' and use \texttt{wrapping\_add} to avoid checking for overflow) eliminates branching and allows \rusteria to verify the property in significantly less time.
\rusteria runs the modified test faster than Kani:
Kani passes the first test in 24.5s and the modified test in 14.7s, while \rusteria times out on the first test and passes the modified test in 1.1s.

Similarly, as we show in \cref{subsubsec:ds_suite}, code patterns such as heap-intensive code and pointer manipulation (which are ubiquitous in real-world Rust code) degrade the performance of Kani significantly.
Our case study in \cref{subsubsec:ds_suite} suggests that \rusteria handles these patterns more efficiently. We discuss path-merging versus path-exploration more extensively in \cref{sec:discussions}.

\subsubsection{Data Structures Tests (DS Suite)}
\label{subsubsec:ds_suite}
To showcase the performance of \rusteria on real, unbiased code, we wrote a number of symbolic tests for several Rust data structures, including \myinline{Vec}, \myinline{LinkedList}, \myinline{BTreeMap}, \myinline{BinaryHeap}, \myinline{VecDeque}, \myinline{Option}, and \myinline{Result}.
The tests (72 in total) create data structures with an arbitrary size up to a bound and perform a number of operations such as inserting, removing, and accessing symbolic elements.
We present their results in \cref{tab:datastructure-tests}, run with a 10-second timeout, and their survival graph in \cref{fig:std-survival-graph}.
As these tests are inherently symbolic, we did not run Miri on them (Miri cannot run symbolic tests).

\begin{figure}[t]
\centering\small
\centering\footnotesize
\setlength{\tabcolsep}{8pt}
\renewcommand{\arraystretch}{1.05}
\begin{tabular}{l @{\hspace{2pt}}@{\hspace{2pt}}ccccccc@{\hspace{2pt}}@{\hspace{2pt}}c}
\toprule
Tool & \lstinline|BinaryHeap| & \lstinline|BTreeMap| & \lstinline|LinkedList| & \lstinline|Option| & \lstinline|Result| & \lstinline|Vec| & \lstinline|VecDeque| & Total\!\! \\ \midrule
Kani & 1 & 1 & 0 & 14 & 16 & 8 & 1 & \textbf{41} \\
Kani (unwind.) & 4 & 2 & 7 & 15 & 16 & 9 & 5 & \textbf{58} \\
\rusteria & 6 & 8 & 8 & 15 & 16 & 11 & 8 & \textbf{72} \\
Total Tests & \textbf{6} & \textbf{8} & \textbf{8} & \textbf{15} & \textbf{16} & \textbf{11} & \textbf{8} & \textbf{72} \\
\bottomrule
\end{tabular}
\vspace{10pt}\\
\hrule height 1pt \vspace{5pt}
\begin{subfigure}[b]{0.49\textwidth}
    \includegraphics[width=\textwidth]{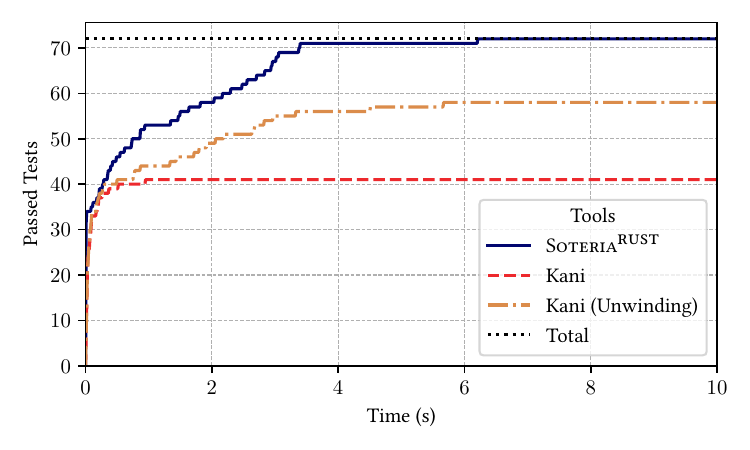}
    \vspace{-20pt}
    \caption{Tests passed in the DS suite over time}\label{fig:std-survival-graph}
  \end{subfigure}
  \hfill
  \begin{subfigure}[b]{0.49\textwidth}
      \includegraphics[width=\textwidth]{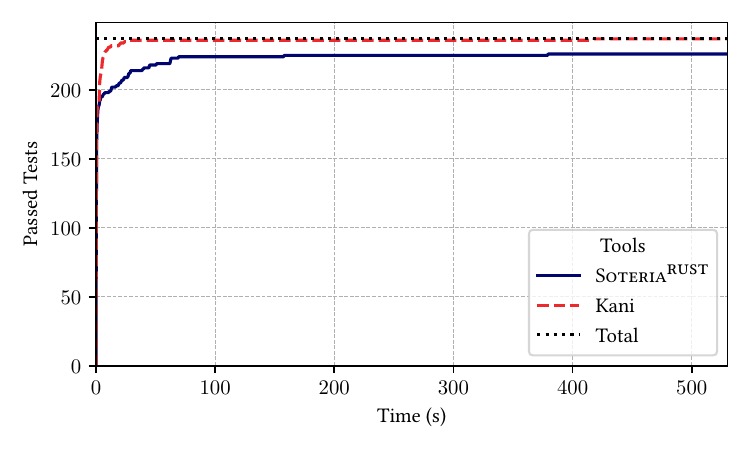}
      \vspace{-20pt}
      \caption{Tests passed in the \texttt{finetime} crate over time}
      \label{fig:finetime-survival-graph}
  \end{subfigure}\vspace{-5pt}
\caption{The DS suite results with a timeout of 10s (above); tests passed over time by each tool (below).
\vspace{-10pt}}
\label{tab:datastructure-tests}
\label{fig:survival-graphs}
\end{figure}


While \rusteria passes \emph{all} tests in 6.6 seconds, Kani only passes 41 (\resp 58) without (\resp with) loop unwinding annotations (\myinline{#[kani::unwind(N)]}); \rusteria ignores these annotations.
We use the minimum unwinding value that ensures no branches are missed.
Furthermore, we run Kani without validity and undefined memory access checks; with these enabled, the tool gives up on the majority of tests, due to incomplete support of uninitialised memory checks.
We run Kani using CaDicaL~\cite{cadical-2-computationally-efficient-biere-2024} as the underlying SAT solver; we observed no notable difference when using Kissat~\cite{entering-sat-competition-biere-2024} instead.
As shown, \rusteria is faster on DS tests, while performing a deeper analysis with validity, undefined memory access, \emph{and} \TB checks.

We highlight that these tests were written \emph{with no knowledge of Kani limitations} and \emph{with no special accommodations to advantage \rusteria}.
In particular, to eliminate bias towards \rusteria, we removed ourselves from the test writing process and used an AI pair programmer to generate them; we present the AI prompt we used in \ifbool{extendedversion}{appendix \cref{app:rusteria-tests-prompt}}{the extended version~\cite{extended-version}}.
The only adjustment we made was to the size bounds of the data structures to ensure the tests complete in a reasonable time.

\subsubsection{Real World Usage}
\label{subsubsec:finetime_case_study}

We compare \rusteria with Kani on {\myinline{finetime}~\cite{finetime-van-woerkom-2025}}, an efficient and high-precision time keeping library.
We used the version from commit \texttt{f5962b8}.
The crate has 237 symbolic tests using the Kani interface, which test the panic-freedom of various calendar conversions, the accuracy of rounding durations, and the correctness of round-trip conversions between time representations.
We note that the \myinline{finetime} tests were written for Kani, and they may thus be biased towards its strengths.

We present the results in \cref{fig:finetime-survival-graph}.
Neither \rusteria nor Kani fail; Kani passes the slowest test in 515s, while \rusteria times out on 11 tests (with timeout set to 530s).
In all 11 cases, \rusteria gets stuck waiting for the solver (Z3~\cite{z3-efficient-smt-demoura-2008}) to resolve queries involving signed remainder and division operations on bitvectors, which are a known limitation of SMT solvers.

Despite limitations around signed operations, \rusteria is competitive with Kani on real-world code, even when bit operations are heavily used.
We plan to improve the support for these operations in future work, with formally verified reductions to simpler operations~\cite{certified-decision-procedures-width-independent-bhat-2025}.

We have also used an AI agent to write symbolic tests for a variety of real-world Rust libraries. This allowed us to find a case of potential UB in the \myinline{hashbrown} crate~\cite{hashbrown-2025}, which implements hash maps and sets, and is used in the Rust standard library; it is also the second most downloaded Rust crate~\cite{crates-io-stats}. We submitted a pull request to fix the issue, which has since been merged.\footnote{\ifbool{reviewversion}{Link to the bug report redacted for anonymity}{\url{https://github.com/rust-lang/hashbrown/pull/692}}}
\section{\coteria: \sot for C}\label{sec:soteria-c}

We develop \coteria, a symbolic execution (SE) engine for C that can perform \emph{both} whole program symbolic testing (WPST) and fully automated, bi-abductive bug detection.
We present an overview of its infrastructure and evaluate its performance and bug-finding capabilities against existing tools.

\subsection{Architecture and Engine}

We use Cerberus~\cite{cerberus-semantics-memarian-2023} to parse C and de-sugar it into AIL, which is close to the abstract C source, but with all types checked and canonicalised.
Much like \rusteria, \coteria is defined by recursively traversing the source AST.
The two core functions of the engine, \myinline{eval_stmt} (evaluating a C statement) and \myinline{eval_expr} (evaluating an expression) return an \myinline{InterM.t} monad value, which is obtained by applying the state monad combinator to the \myinline{Symex.Result} monad.
Intuitively, \myinline{InterM.t} is an SE monad where each branch may either be successful or erroneous, and where each branch carries an additional state corresponding to the C heap.
For example, evaluating \myinline[language=C]{if} is as shown below.
\begin{wrapfigure}[8]{r}{0.5\textwidth}
\begin{lstlisting}
let rec exec_stmt = function
| AilSif (cond, then_stmt, else_stmt) ->
  let* v = eval_expr cond in
  let* v = cast_aggregate_to_bool v in
  if%sat v then exec_stmt then_stmt
  else exec_stmt else_stmt
| ...
\end{lstlisting}
\vspace{-10pt}
\end{wrapfigure}
\indent First, the guard \myinline{cond} is evaluated, obtaining the symbolic value \myinline{v}. C expressions evaluate to an `aggregate' value, which is either a base value (integer, float or pointer) or a structure; therefore, \myinline{v} must be cast to a symbolic boolean using \myinline{cast_aggregate_to_bool}. Finally, we call \myinline{if\%sat} to evaluate the `then' and `else' statements depending on the satisfiability of \myinline{v}.

\paragraph{Bi-abduction and Bug Finding} We implement `fix-from-error' bi-abduction, adapting existing work~\cite{javert-20-compositional-fragososantos-2019,compositional-symbolic-execution-loow-2024,gillian-foundations-implementation-ayoun-2024} to \sot.
The main novelty of our approach lies in formulating the transformation from standard SE to bi-abductive SE as a state monad transformer, which we distribute as part of the \sot library in the \myinline{State} module. We elide the details of this transformer for lack of space, though it is defined and proven in~\citewithauthor{gillian-foundations-implementation-ayoun-2024}.

As with \pulse~\cite{finding-real-bugs-le-2022}, automatic bug finding works by first running bi-abductive SE to generate function summaries, and then applying a `manifest bug' criterion to the summaries that have an erroneous post-condition to determine whether a bug should be reported to the user.

\subsection{Evaluation}
\label{sec:evaluation-c}
\begin{figure}[t]
  \centering\small
\centering\footnotesize
\setlength{\tabcolsep}{8pt}
\renewcommand{\arraystretch}{1.05}
\begin{tabular}{@{} l r r r r @{}}
\toprule
Folder & Tests & CBMC (s) & Gillian-C (s) & \coteria (s) \\
\midrule
array     & 21 & 131.45 & 15.43 & \textbf{11.39} \\
deque     & 34 & 278.47 & 23.89 & \textbf{11.41} \\
list      & 37 & 134.94 & 25.85 & \textbf{23.73} \\
pqueue    & 2  & 2.27   & 1.42  & \textbf{1.12} \\
queue     & 4  & 4.33   & 2.73  & \textbf{1.42} \\
rbuf      & 3  & 3.13   & 2.08  & \textbf{1.00} \\
slist     & 37 & 48.03  & 25.40 & \textbf{15.28} \\
stack     & 2  & 2.16   & 1.36  & \textbf{0.77} \\
treeset   & 6  & 180.00 & 4.28  & \textbf{2.85} \\
treetable & 13 & 390.00 & 8.97  & \textbf{5.57} \\
\midrule
Total     & 159 & 1174.78 & 111.41 & \textbf{74.54}  \\
\bottomrule
\end{tabular}
  \caption{Whole-program symbolic testing (WPST) results on Collections-C with shortest times \textbf{highlighted}.}
  \label{tbl:wpst-collections-c}
  \vspace{-10pt}
\end{figure}
We evaluate \coteria on Collections-C, used by Gillian-C to evaluate WPST and bi-abduction~\cite{gillian-foundations-implementation-ayoun-2024,gillian-part-multilanguage-fragososantos-2020,compositional-symbolic-execution-loow-2024}.
We run all benchmarks on a MacBook Pro (M4 Pro, 14 cores, 32GB RAM).

\paragraph{Whole-Program Symbolic Testing (WPST)} We run \coteria, Gillian-C and CBMC on the same 159 symbolic tests as \citewithauthor{gillian-part-multilanguage-fragososantos-2020}, with a timeout of 30s per test, and compare their results.
We run the three tools with options that, we believe, provide a fair comparison with \coteria in terms of number of checks.%
\footnote{Specifically, we run Gillian-C with default options, 
and we run \coteria with \lstinline[language=bash]|--alloc-cannot-fail --infinite-fuel --havoc-undef --cbmc|.
We run CBMC~6.8.0 (Nov 5th 2025) with the \lstinline[language=bash]|--bounds-check --pointer-check --div-by-zero-check --pointer-primitive-check --unwind 10 --drop-unused-functions --signed-overflow-check --pointer-overflow-check --float-overflow-check --unwinding-assertions --no-malloc-may-fail|. 
} In particular, all three tools are run such that there is no limit on the number of explored paths, meaning that \emph{all} feasible branches are explored (the tests are written to be bounded).
We present the performance results in \cref{tbl:wpst-collections-c}.

\begin{wrapfigure}[7]{r}{0.5\textwidth}
\vspace{-10pt}
\centering
\begin{minipage}{0.5\textwidth}
\label{tab:tool_status_summary}
\centering\footnotesize
\setlength{\tabcolsep}{8pt}
\renewcommand{\arraystretch}{1.05}
\begin{tabular}{@{} l @{\hspace{10pt}} c @{\hspace{10pt}} c @{\hspace{10pt}} c @{\hspace{10pt}} c @{\hspace{10pt}} c @{}}
\toprule
Tool & Pass & Fail (FN) & Fail (FP) & Crash & Timeout \\
\midrule
\coteria  & \textbf{159} & \textbf{0} & \textbf{0} & \textbf{0} & \textbf{0} \\
Gillian-C & 156 & 3 & \textbf{0} & \textbf{0} & \textbf{0} \\
CBMC  & 124 & 2 & 2 & 1 & 30 \\
\bottomrule
\end{tabular}
\end{minipage}\vspace{-5pt}
\caption{The WPST result categories on Collections-C}
\label{fig:wpst_categories}
\end{wrapfigure}%
Note that these tests were written by the Gillian-C authors, and thus they are subject to \emph{bias} (towards Gillian-C), as we described above in \cref{sec:soteria-rust}. 
\coteria runs \emph{faster} than Gillian-C on \emph{all tests}; this is despite the fact that, unlike \coteria, Gillian-C does not perform integer overflow checks. 
Both \coteria and Gillian-C run \emph{significantly faster} than CBMC.
However, as these tests are (potentially) biased towards Gillian-C,  further evaluation is needed to compare the tools on tests written for \sot and for CBMC.
Moreover, CBMC times out on 30 of the (159) tests (with the timeout of 30s). As such, 900s of the total $\sim$1190s taken by CBMC is spent on timed-out tests.

Moreover, as shown in \cref{fig:wpst_categories}, \coteria performs a \emph{more accurate} analysis than both Gillian-C and CBMC. 
Specifically, all bugs found by Gillian-C and CBMC were also found by \coteria; that is,  to our knowledge \coteria has no false negatives (FNs, missed bugs which were found by at least one other tool). 
On the other hand, Gillian-C and CBMC have three and two FNs, respectively.
In the case of Gillian-C, two of the FNs occur because the Gillian compiler \emph{optimises away} bugs due to accessing uninitialised memory; the reason for the third FN in Gillian-C is unclear to us. 
Similarly, CBMC misses the same two bugs as Gillian-C due to accessing uninitialised memory; this is because CBMC does not check for accesses to uninitialised memory.

Moreover, \coteria and Gillian-C have no \emph{false positives} (FPs), do not crash and do not time out. 
However, as discussed above, CBMC times out on 30 of the tests and crashes on one. 
Lastly, CBMC reports two FPs; \ie CBMC reports two bugs where there are none.
This is because CBMC performs over-approximate (\mox) analysis that is not always precise enough for bug detection.


\ifbool{extendedversion}{}{\ifbool{reviewversion}{\newpage}{}}
\paragraph{Automatic Bug Finding} We evaluate the performance and ability of \coteria to find bugs in Collections-C automatically,  and compare it with \pulse~\cite{finding-real-bugs-le-2022} (version 1.2.0).
%
We run \pulse with \lstinline[language=bash]|--pulse-only -j1| to activate only the \pulse analysis with a single thread.
Both \coteria and \pulse find the same bug, which we have confirmed to be a \emph{true positive}; we submitted a pull request to fix it upstream, which has been accepted.
\coteria is competitive with \pulse in terms of performance.

\begin{wrapfigure}{r}{0.33\textwidth}
\vspace{-1em}
  \centering\small
  \begin{tabular}{@{}lcc@{}}
    \toprule
    Tool & Time (s) & Bugs found \\
    \midrule
    \coteria    & 2.57 & 1 \\
    \pulse  & 2.52 & 1 \\
    \bottomrule
  \end{tabular}\vspace{-1em}
  \caption{Results of bi-abductive bug finding on Collections-C.}
  \label{tbl:bi-abd-collections-c}\vspace{-1em}
\end{wrapfigure}
Thanks to the \myinline{Stats} utility of \sot, we can provide a detailed break-down of the analysis performed by \coteria in 2.57s.  
Specifically, \coteria spends: 0.35s parsing and desugaring the source code; 1.35s analysing the \emph{578 functions} in the codebase (containing assorted control flow) and generating \emph{2364 summaries} (each summary corresponds to one explored satisfiable path); and 0.60s performing ``manifest bug'' analysis on these summaries, isolating the single true positive bug report. The whole process runs through 54{,}925 \myinline{if\%sat} calls, performing 10{,}503 sat checks that are not immediately simplified.

We do not compare \sot with Gillian-C because Gillian-C does not implement a `manifest bug' analysis (needed for identifying true positive bug reports) and only produces function summaries.
Specifically,Gillian-C produces (on Collections-C) over 4000 bug summaries, and thus (without a manifest bug analysis) manual inspection is infeasible.
Moreover, Gillian-C runs significantly slower: Gillian-C takes $\sim$62.8s, while \coteria and \pulse take $\sim$2.57s and $\sim$2.52s, respectively. The data provided in this paragraph for Gillian-C is extracted from \citet{compositional-symbolic-execution-loow-2024}.

In the future we would like to evaluate \coteria on a larger and more diverse set of C benchmarks. Nevertheless, in merely three person-months, we implemented a symbolic C engine that can perform WPST and bi-abductive bug finding. This is thanks to the flexibility and reusability of \sot.
\section{Formalisation}\label{sec:formalisation}
We formalise the symbolic execution (SE) monad that underpins \sot. Our formalisation is closely inspired by existing formalisations~\cite{verified-symbolic-execution-keuchel-2022}, with the main novelty being the support for under-approximating monadic SE. 
It provides a formal, mental model of the \sot implementation, and client implementations of the \myinline{Symex} interface must be correct against this formalisation. 

\paragraph{Values and Symbolic Variables}
Recall from \cref{sec:overview} that SE uses \emph{symbolic variables} to represent multiple values at once. 
\sot is parametric on the values desired for execution, supplied by the user. 
As such, we assume the user provides a set of values, $\vals$, together with a set of \emph{sorts} $\sorts \ni \sort$ with $\sorts \eqdef \mathcal{P}(\vals)$.
%
%
We also assume a countably infinite set of variable identifiers, $\varids$, and define symbolic variables $\sx, \sy \in \svars \eqdef \varids \times \sorts$ (as a convention, we denote symbolic entities with a hat) as pairs of a variable identifier and a sort. 
We write $\sx : \sort$ to denote that $\sx$ is of sort $\sort$.

Symbolic variables and concrete values are connected through \emph{symbolic interpretations}, $\sint \in \sints \eqdef \svars \parmap \vals$, defined as partial maps from symbolic variables to values,  enforcing the following \emph{well-sortedness invariant}: for all $\sx, v$, if $\sx : \sort$ and $\sint(\sx) = \val$, then $\val \in \sort$. 

\paragraph{Symbolic Abstractions}
Given a set $A$, a \emph{symbolic abstraction} over $A$, $\symb{a} \in \symb{A}$,  is an object that is interpreted as a set of elements of $A$ under some $\sint$.
Formally, there must exist a relation $\models~\subseteq \sints \times A \times \symb{A}$ such that $\sint, a \models \symb{a}$ means that $a$ is one of the possible values represented by $\symb{a}$ under interpretation $\sint$.
%

A simple example of symbolic abstraction is \emph{symbolic values}, which are symbolic abstractions over the set of values $\vals$. 
Symbolic values can also be built by lifting operators over values to symbolic values in the natural way.
For instance, the value $\sx + 1$ is the symbolic value that represents the singleton set $\{ \sint(\sx) + 1 \}$ under interpretation $\sint$. Similarly, symbolic maps as presented in \cref{sec:data} are symbolic abstractions of finite partial maps. In general, the \myinline{Data} module of \sot contains implementations for symbolic abstractions over data stuctures.

\paragraph{Symbolic Booleans and Solvers}
To define the SE monad we need the notion of \emph{symbolic booleans}, $\pcs \ni \pc$, which are symbolic abstractions over booleans $\bools$. 
A symbolic boolean $\pc$ is \emph{satisfiable}, denoted $\sat(\pc)$, if there exists an interpretation $\sint$ such that $\sint, \vtrue \models \pc$. 
Conversely, it is \emph{unsatisfiable}, denoted $\unsat(\pc)$, if there exist no such interpretation.

A \emph{solver} is an oracle that can determine the satisfiability (or unsatisfiability) of symbolic booleans. 
In practice, however, solvers are not perfect. 
For instance, SMT solvers are often based on incomplete theories and may return $\unknown$ when queried.
To address this, we interpret $\unknown$ results \emph{approximately}, based on the analysis \emph{mode} $m$, which may be under- ($\mux$) or over-approximate ($\mox$). 
Specifically, SE in \mux must not deem an infeasible path as feasible, and thus it must interpret $\unknown$ as $\unsat$. 
Conversely, SE in \mox must interpret $\unknown$ as $\sat$.
We generalise this by defining $\mode$-\emph{approximate} solvers $\sat_\mode$, where $\mode \in \{ \mox, \mux \}$, such that:\\
\centerline{
$
	\sat(\pc) \Rightarrow \sat_\mox(\pc) 
	\qquad
	\text{and}
	\qquad
	\sat_\mux(\pc) \Rightarrow \sat(\pc)
$
}
\begin{remark}
Another practical consideration that tends to be ignored by existing formalisations of symbolic execution is that of \emph{partial functions}. 
Specifically, SMT-LIB-compatible solvers such as Z3 or CVC5 consider partial functions as \emph{unspecified} when applied outside their domain. 
For instance, Z3 deems the expression $\sy = \sx / 0$ to be \emph{satisfiable} for all $\sx$ and $\sy$, because division by zero makes the expression unspecified, not undefined. 
Therefore, the symbolic boolean $\sx / 0$ must be viewed as a symbolic abstraction that can be interpreted to \emph{any integer} under any interpretation. 
\end{remark}


\paragraph{Branches} A \emph{branch},  $\braket{a}{\pc} \in \braket{A}{\pcs}$, is a pair comprising a result value $a \in A$ and a \emph{path condition} (a symbolic boolean) $\pc \in \pcs$. 
%
We can now define the SE monad as follows:
\[
\begin{array}{@{} r @{\hspace{2pt}} l@{}}
	\symexM(A) \eqdef 
	& \pcs \rightarrow \poset(\braket{\symb{A}}{\pcs}) 
	\hfill
	\mathtt{return}(a) \eqdef \lambda \pc.~\{ \braket{a}{\pc} \} \\
	\mathtt{bind}(m,f) \eqdef
	& \lambda \pc.~ \{ \braket{b}{\pc''} \mid \braket{a}{\pc'} \in m(\pc) \land \braket{b}{\pc'} \in f(a, \pc') \}
\end{array}
\]
%
%
The \myinline{return} function returns a single branch containing the given value and does not modify the path condition. 
The \myinline{bind} function takes each branch of the first computation and feeds its value and path condition to the second computation, collecting all resulting branches.
We can similarly define other operations in the \myinline{Symex} interface; \eg we can define \myinline{nondet}  and \myinline{if\%sat} as follows:
\vspace{-5pt}
%
\begin{align*}
	\mathtt{nondet}(\sort) \eqdef 
	& \lambda \pc.~ \{ \braket{\sx}{\pc} \} \texttt{ where } \sx : \sort \text{ is fresh} \vspace{-5pt}\\
	\mathtt{branch\_on}(\symb{b},t,e) \eqdef 
	& \lambda \pc.~ \big(t(\pc \land \symb{b}) \text{ if } \sat_\mode(\pc \land \symb{b}) \text{ else } \emptyset\big)
 	\cup
 	\big(e(\pc \land \neg \symb{b}) \text{ if } \sat_\mode(\pc \land \neg \symb{b}) \text{ else } \emptyset\big)
\end{align*}
\vspace{-15pt}
%
%

That is, as described in \cref{sec:implem}, symbolically executing a branch adds the guard ($\symb b$) and its negation ($\neg\symb b$) to the path condition and executes the corresponding branch if the resulting path condition is satisfiable; otherwise, it discards the branch. 
If both are satisfiable, it explores both branches.

\paragraph{Soundness}
We define soundness for a \emph{SE computation}, \ie a function $\symb{A} \rightarrow \symexM(\symb{B})$, with respect to a nondeterministic computation, \ie a function $A \rightarrow \poset(B)$.
We define both $\mox$ and $\mux$ soundness (respectively desirable for verification and bug-finding) in the \ifbool{extendedversion}{technical appendix}{extended version~\cite{extended-version}}. 
Let $f : A \rightarrow \poset(B)$ be a nondeterministic function, $\symb{A}$ and $\symb{B}$ be symbolic abstrations over $A$ and $B$, and $\symb{f} : \symb{A} \rightarrow \symexM(\symb{B})$. 
Intuitively, a function $\symb{f}$ is \emph{$\mox$-sound} with respect to $f$, written $\symb{f} \soundwrtox f$, if each outcome of $f$ is covered by a branch of $\symb{f}$ (is also an outcome of $\symb{f}$).
We define $\mux$-soundness analogously; we elide the formal definitions here and present them in the \ifbool{extendedversion}{technical appendix (\cref{app:formalisation})}{extended version~\cite{extended-version}}.

A simple example of a both $\mox$- and $\mux$-sound symbolic computation is \myinline{nondet}, which is sound with respect to the nondeterministic function that returns all values of the given sort. More complex symbolic computations are built by composing simple symbolic computations using $\mathtt{bind}$ and $\mathtt{branch\_on}$. We show that these two operations \emph{preserve} soundness.
Specifically, \cref{thm:bind-branch-sound-app} (proof in \ifbool{extendedversion}{\cref{app:formalisation}}{the extended version~\cite{extended-version}}) states that 
\begin{enumerate*}
	\item composing two $\mode$-sound symbolic computations using $\mathtt{bind}$ yields an $\mode$-sound symbolic computation (this  ensures the soundness of sequentially composing computations); and
	\item $\mathtt{branch\_on}$ is a symbolic lifting of the concrete \myinline{if...else...}~construct.
\end{enumerate*}

\begin{theorem}[Soundness Preservation]\label{thm:bind-branch-sound-app}
For all $f, \symb{f}, g, \symb{g}, a, \symb{a}, b, \symb{b}$ and all modes $\mode \in \{ \mox, \mux \}$:
\begin{enumerate}
	\item if $\symb{f} \soundwrtm f$ and $\symb{g} \soundwrtm g$, 
then $\symb{g} \fishop \symb{f} \soundwrtm g \fishop f$, where $p \fishop q \eqdef \lambda x.~ \mathtt{bind}~p(x)~q$, denoting the Kleisli composition (`fish') operator.
	\item Let $h \eqdef \mathtt{if}~b~\mathtt{then}~f~a~\mathtt{else}~g~a$
	and $\symb{h}~(\symb{b}, \symb{a}) \eqdef \mathtt{branch\_on}~\symb{b}~(\symb{f}~\symb{a})~(\symb{g}~\symb{a})$.
	If $\mathtt{branch\_on}$ uses an $\mode$-approximate solver, 
	$\symb{f} \soundwrtm f$ and $\symb{g} \soundwrtm g$, 
	then $\symb{h} \soundwrtm h$.
\end{enumerate}
\end{theorem}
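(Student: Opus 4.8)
The plan is to unfold the definitions of $\mode$-soundness and chase outcomes (for $\mox$) or branches (for $\mux$) through the two constructions. The only non-definitional ingredient is a \emph{path-condition monotonicity} property: every branch $\braket{\symb{b}}{\pc'}$ produced by $\symb{f}(\symb{a})(\pc)$ has $\pc'$ entailing $\pc$. This holds by inspection for $\mathtt{return}$, $\mathtt{nondet}$, $\mathtt{bind}$ and $\mathtt{branch\_on}$, hence for every computation built from them, so it may be assumed of any well-formed client computation. Recall also, from the definition of $\fishop$, that $(\symb{g} \fishop \symb{f})(\symb{a}) = \mathtt{bind}\ \symb{g}(\symb{a})\ \symb{f}$ (so $\symb{g}$ runs first), and correspondingly $(g \fishop f)(a) = \bigcup_{b \in g(a)} f(b)$.

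For part~(1), $\mox$ mode: fix $\sint, a, \symb{a}, \pc$ with $\sint, a \models \symb{a}$ and $\sint, \vtrue \models \pc$, and take $c \in (g \fishop f)(a)$, so $c \in f(b)$ for some $b \in g(a)$. Applying $\symb{g} \soundwrtm g$ yields $\braket{\symb{b}}{\pc'} \in \symb{g}(\symb{a})(\pc)$ with $\sint, b \models \symb{b}$ and $\sint, \vtrue \models \pc'$; applying $\symb{f} \soundwrtm f$ to this branch yields $\braket{\symb{c}}{\pc''} \in \symb{f}(\symb{b})(\pc')$ with $\sint, c \models \symb{c}$ and $\sint, \vtrue \models \pc''$; by the definition of $\mathtt{bind}$, $\braket{\symb{c}}{\pc''} \in (\symb{g} \fishop \symb{f})(\symb{a})(\pc)$, as required. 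For $\mux$ mode, run the chain backwards: a branch $\braket{\symb{c}}{\pc''}$ of $(\symb{g} \fishop \symb{f})(\symb{a})(\pc)$ factors (by the $\mathtt{bind}$ clause) through some $\braket{\symb{b}}{\pc'} \in \symb{g}(\symb{a})(\pc)$ with $\braket{\symb{c}}{\pc''} \in \symb{f}(\symb{b})(\pc')$. Given a model $\sint'$ of $\pc''$ extending $\sint$, monotonicity makes $\sint'$ a model of $\pc'$ too, so the $\mux$-soundness of $\symb{g}$ (with $\sint$ as base interpretation and $\sint'$ as the witnessing extension) supplies $b \in g(a)$ with $\sint', b \models \symb{b}$; then the $\mux$-soundness of $\symb{f}$ (now with $\sint'$ as base interpretation, which does model $\pc'$ and $\symb{b}$) supplies $c \in f(b) \subseteq (g \fishop f)(a)$ with $\sint', c \models \symb{c}$.

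For part~(2), unfold $\mathtt{branch\_on}\ \symb{b}\ (\symb{f}\,\symb{a})\ (\symb{g}\,\symb{a})$ at $\pc$: its branches are those of $\symb{f}(\symb{a})(\pc \land \symb{b})$ when $\sat_\mode(\pc \land \symb{b})$, together with those of $\symb{g}(\symb{a})(\pc \land \neg\symb{b})$ when $\sat_\mode(\pc \land \neg\symb{b})$. In $\mox$ mode, given a model $\sint$ with $\sint, b \models \symb{b}$, $\sint, a \models \symb{a}$, $\sint, \vtrue \models \pc$ and an outcome $c \in h(b, a)$: if $b$ is true then $c \in f(a)$ and $\sint, \vtrue \models \pc \land \symb{b}$, hence $\sat(\pc \land \symb{b})$, hence $\sat_\mox(\pc \land \symb{b})$ by $\sat(\cdot) \Rightarrow \sat_\mox(\cdot)$, so the then-branch is explored; $\mox$-soundness of $\symb{f}$ at path condition $\pc \land \symb{b}$ then produces a branch matching $c$ inside $\mathtt{branch\_on}(\dots)(\pc)$. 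The case where $b$ is false is symmetric via $\neg\symb{b}$ and $\symb{g}$. In $\mux$ mode, take a branch $\braket{\symb{c}}{\pc'}$ of $\mathtt{branch\_on}(\dots)(\pc)$ and a model $\sint' \supseteq \sint$ of $\pc'$; if it came from the then-branch, monotonicity makes $\sint'$ model $\pc \land \symb{b}$, so $\symb{b}$ denotes true at $\sint'$ and $h(\text{true}, a) = f(a)$, whence $\mux$-soundness of $\symb{f}$ (with $\sint'$ as base interpretation) yields $c \in f(a) = h(\text{true}, a)$ with $\sint', c \models \symb{c}$; the else-branch case is symmetric. The $\mode$-approximate-solver hypothesis is what makes $\mathtt{branch\_on}$ respect the mode: in $\mox$ the inequality $\sat \Rightarrow \sat_\mox$ guarantees no feasible branch is discarded (this is the crux of the $\mox$ case above), while in $\mux$ the dual $\sat_\mux \Rightarrow \sat$ guarantees the solver never pretends an infeasible branch is feasible.

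The forward/backward chase itself is essentially bookkeeping; I expect the real care to lie in the $\mux$ direction, for two reasons. First, $\symb{g}$ (resp.\ $\symb{f}$) may introduce fresh symbolic variables via $\mathtt{nondet}$, so the interpretations witnessing $\mux$-soundness must be quantified over \emph{extensions} of the incoming one, and one must check both that the input value stays modeled by its symbolic abstraction along such extensions (an in-scope invariant on free variables) and that each recursive appeal is made with the correct base interpretation --- as above, the strengthened path condition is what makes the extension usable as a fresh base. Second, symbolic abstractions need not be single-valued, so the concrete witnesses ($b$ in part~(1), the boolean true in part~(2)) must be produced through the existentials already present in the soundness statement rather than by naively ``evaluating'' a symbolic expression at an interpretation. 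The remaining standing obligation is to confirm path-condition monotonicity for arbitrary client implementations of the $\mathtt{Symex}$ interface, since the $\mux$ argument relies on it at each composition step.
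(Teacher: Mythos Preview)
Your high-level strategy---forward chase through concrete outcomes for $\mox$, backward chase through symbolic branches for $\mux$---is exactly what the paper does, and your $\mox$ arguments for both parts line up with the paper's essentially verbatim.

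The divergence is in the $\mux$ direction, where you lean on path-condition monotonicity as ``the only non-definitional ingredient''. The paper does not need it, because its $\mux$-soundness definition is shaped so that each application already returns a model of the \emph{input} branch: from $\braket{\symb{c}}{\pc''} \in \symb{f}(\symb{b})(\pc')$ and any $\sint \models \pc''$, $\mux$-soundness of $\symb{f}$ yields $b$ with $\sint, b \models \braket{\symb{b}}{\pc'}$, and unpacking that branch-modeling relation gives $\sint \models \pc'$ for free. The chain then continues into $\symb{g}$ without any separate entailment lemma. Your monotonicity property is true of the primitives, so your argument is not wrong---it just carries an assumption the paper's definition renders superfluous. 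Relatedly, your closing concern about threading interpretation \emph{extensions} through the $\mux$ case is off target for this paper: extensions $\sint' \geq \sint$ appear only in the $\mox$ definition (where you treat them correctly); the paper's $\mux$ definition quantifies universally over all $\sint$ satisfying the output path condition, with no base interpretation in sight. What you have sketched would be the right proof for a $\mux$ notion that fixes a concrete input $a$ under a base $\sint$ and asks for witnesses under extensions, but that is not the definition the appendix uses.
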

\section{Discussion, Limitations and Future Work}
\label{sec:discussions}
We begin by discussing the advantages of two key design decisions we made in \sot, namely \begin{enumerate*}
	\item implementing \sot as a shallowly-embedded library in \ocaml; and 
	\item implementing ``symbolic execution'' (SE), rather than ``symbolic compilation'' that performs path merging and generates a single formula.
\end{enumerate*} 
We then proceed with discussing the limitations of \sot, which naturally lead to possible avenues of future work. 

\subsection{Key Design Decisions in \sot}\vspace{-5pt}
\myparagraph{\sot in \ocaml} Unlike most reusable SE, which provide a single intermediate language (IL) to analyse many languages, \sot is a implemented in \ocaml, providing base abstractions (the SE monad) and reusable components (symbolic computations and data structures) to build custom SE engines for different languages. While we have demonstrated that this method is effective (at least for Rust and C), we point out one additional advantage of this approach.

Specifically, by forgoing an IL, we can leverage all features of our host language (\ocaml) when implementing engines. For instance, in our implementation of \lstinline{BV_value}, we make extensive use of \ocaml's ability to attach \emph{ghost types} to values, allowing us to ensure \eg that ill-typed expression cannot be constructed. Doing so in an IL would require implementing a custom type system from scratch, requiring substantially more work and likely leading to a reduced expressiveness in comparison to \ocaml's mature type system.
Another example is our use of \ocaml's module system which allows us to use functors, thereby seamlessly enables us to make an engine parametric on various choices, including the choice of the symbolic state representation, or the symbolic pointer representation. Using an IL, one would have to embed this parametricity in the IL itself.

\myparagraph{Symbolic Execution versus Symbolic Compilation for Bug Finding} \sot performs symbolic \emph{execution}: it explores all feasible paths of the program separately and generates a formula for each path. An alternative approach is to perform symbolic \emph{compilation}, where the engine performs path merging and generates a single formula for the whole program.
Nevertheless, we believe it is possible in \sot to implement path merging on a given object of type \myinline{'a Symex.t} (a computation that yields a value of type \myinline{'a}), as long as a function \myinline{merge: Value.t -> 'a -> 'a -> 'a} is provided; this is in line with the approach taken in Grisette~\cite{grisette-symbolic-compilation-lu-2023}.

However, existing data suggests that while symbolic compilation shines when merging path conditions over simple values such as bitvectors (as in the \myinline{ctpop} example of \cref{fig:ctpop-tests}), it does not perform as well when merging path conditions over expressions that characterise more complex data structures, \eg objects in the heap, which can overwhelm the solver. 
An example of the latter is inserting $n$ non-deterministic integers into a Rust \myinline{BTreeSet} and checking that the resulting set is ordered. 
For $n=3$, \rusteria explores 13 paths in 2.09s, while Kani merges all paths into a single SAT instance of
${\sim}50$M variables, taking 691s to solve; for $n=6$, \rusteria
explores 4{,}683 paths in 456s, while Kani exceeds a two-hour timeout (see \ifbool{extendedversion}{\cref{app:btreemap}}{the extended version~\cite{extended-version} for more details}).

Moreover, we envision \sot being used primarily to implement \emph{bug detection} (rather than \emph{verification}) tools, where path merging may lead to imprecise abstractions~\cite{limits-difficulties-design-ascari-2024}. In the bug detection setting we believe that path exploration is more effective,  and it is the approach taken by the industrial bug detection engine Infer.Pulse~\cite{incorrectness-logic-ohearn-2019,local-reasoning-presence-raad-2020}. 

\subsection{Limitations and Future Work}
\label{sec:limitations}
%

\myparagraph{Semantic Coverage} \rusteria and \coteria do not currently support various features such as concurrency (through the async/await constructs),  FFI calls, or inline assembly.
In the future, we will extend the coverage of both interpreters by exploring foreign calls between C and Rust and leveraging both interpreters. 
We will further explore how to leverage the guarantees offered by the semantics of Rust to support bug finding in concurrent code that uses async/await constructs.

A main roadblock in extending our coverage thus far lies with the \emph{frontends} we use. 
For instance, Cerberus (our \coteria frontend) does not currently support several built-in atomic operations used by widely-used libraries such as \lstinline|libgit2| . These operations are not part of the C11 standard; they are compiler-specific extensions provided by Clang and GCC. 
Lack of support for these operations limits the applicability of \coteria, as it prevents extracting an AST for C programs using them. 


\myparagraph{Solvers} While we provide blueprints and a clear signature for custom new values and solvers, we believe a substantial part of the required work  could be \emph{automated} through meta-programming. 
In the future, we will create a DSL to specify new symbolic values, their operations, and simplifications that can be extracted to interactive theorem provers to be proven sound. 
We will further provide reusable types with libraries of simplifications for common symbolic values such as bitvectors.
Finally, we will explore adjusting the settings and default tactics of Z3 to improve performance on our generated queries,  and we will explore other solvers as well.

\myparagraph{Polymorphic and Compositional Reasoning for Rust} We will extend \rusteria to support executing polymorphic functions, leveraging its parametric design and its Charon frontend.
This will allow executing symbolic tests that model execution for all type instantiations of a polymorphic function.
We will then explore compositional bug finding for Rust by leveraging this polymorphic support and the existing bi-abduction support in \sot. 

\myparagraph{Lean} Our current \sot implementation benefits from our extensive knowledge of \ocaml for efficiency and ease of maintainability.
However, since \sot is a functional library, it can be implemented in any functional language with support for monads. In the future, we will explore re-implementing \sot in Lean. We will explore other, better-suited optimisations, leverage Lean's powerful meta-programming capabilities, and investigate the interaction between symbolic execution and interactive theorem proving in this setting.
\section{Related work}
\label{sec:relwork}

\myparagraph{Monadic Symbolic Execution}
Previous work has explored using monads to structure SE engines, in more theoretical settings.
\citewithauthor{abstracting-definitional-interpreters-darais-2017} propose a methodology to lift a definitional interpreter for a given language written in a monadic style into various abstract interpreters, including a symbolic one, but do not explore an implementation for a real programming language, or how to optimise it.
\citewithauthor{verified-symbolic-execution-keuchel-2022}, on the other hand, present a monadic approach mechanised in Rocq and implement in \textsc{Katamaran}, a tool for proving ISA security properties.
While \textsc{Katamaran} comes with a mechanised proof of soundness, its main aim is to improve the proof engineering experience inside Rocq.
They do this by using a deep embedding of path conditions that enables manual simplifications without resorting to meta-programming, and their monadic approach is \emph{not executable}.
\sot, in contrast, explores an executable and \emph{reusable} approach to monadic SE.

Owi~\cite{owi-performant-parallel-andres-2024} is an SE tool for Wasm that uses a similar monadic structure to \sot.
It is designed for performant, parallel exploration of multiple paths -- a capability we have not yet exploited in \sot.
Unlike \sot, Owi treats Wasm as a fixed IL for analysing programs compiled from C, Rust, Zig, and so forth. 
As such, unlike \rusteria, Owi does not support Rust-specific features such as \TBs, because the resulting Wasm code lacks the necessary information.

\myparagraph{Abstract Interpretation for Static Analysis} There are many scalable static analysers based on abstract interpretation~\cite{abstract-interpretation-unified-cousot-1977,framac-software-analysis-kirchner-2015,why-does-astree-cousot-2009}. In particular, MOPSA~\cite{design-modular-platform-static-mine-2018,easing-maintenance-academic-monat-2024} is a multi-language analysis platform (currently for C and Python) that (as with \sot) aims for modularity and reusability. For instance, it uses \ocaml features such as extensible variants to enable elegant reuse of transfer functions across languages. Currently, MOPSA is a whole-program (non-compositional) static analyser and targets OX analyses, whereas \sot primarily focuses on UX bug finding.

\myparagraph{Semantic Framework and Symbolic Lifting} 
Rosette~\cite{rosette-growing-solver-aided-torlak-2013,lightweight-symbolic-virtual-torlak-2014} extends Racket \cite{racket} with solver-aided programming features, automatically lifting a concrete interpreter for a language written in Racket into a symbolic one using symbolic compilation (that is, compiling the entire program to a single SMT formula). 
Grisette~\cite{grisette-symbolic-compilation-lu-2023} later formulated this approach, as well as various optimisations, using a monadic approach, providing a functional programming library in Haskell. In principle, one could encode analyses such as ours in Rosette and Grisette, though previous experiences have shown that this approach struggles to scale to real world code~\cite{symbolic-execution-javascript-santos-2018,javert-20-compositional-fragososantos-2019}.

Similarly, the $\mathbb{K}$ framework~\cite{overview-semantic-framework-rosu-2010} allows one to define the formal semantics of a language using rewriting logic, and it has an SE backend for performing symbolic testing. 
However, defining semantics in $\mathbb{K}$ entails substantial work:
the KMIR project~\cite{introducing-kmir-concrete-cumming-}, commenced in 2023, aims to provide a K semantics and SE for Rust; however, it does \emph{not} support \TBs and does not seem to be ready for use -- we could not find any existing symbolic tests.

\myparagraph{Rust Analysis Engines} There are several tools for semi-automatic verification of Rust programs~\cite{leveraging-rust-types-astrauskas-2019,creusot-foundry-deductive-denis-2022,refinedrust-type-system-gaher-,verus-verifying-rust-lattuada-2023,hybrid-approach-semiautomated-ayoun-2025,flux-liquid-types-lehmann-2022,aeneas-rust-verification-ho-2022,modular-formal-verification-foroushaani-2022}. 
While they all provide more guarantees than \rusteria, they all require \emph{manual user annotations} to specify pre- and post-conditions, loop invariants, and at times even tactics to guide the proof. 
Moreover, \emph{none} support reasoning about \TBs. 
Kani~\cite{kani-verifying-dynamic-trait-vanhattum-2022} is an industry-grade, state-of-the-art symbolic testing tool for Rust, and it does so by compiling Rust code to the CBMC IL~\cite{cbmc-bounded-model-kroening-2014}. 
We compared \rusteria and Kani at length in \cref{sec:soteria-rust}.

\myparagraph{Bi-abduction and Bug Finding} To our knowledge, there are two tools that perform bi-abductive SE for automatic bug finding: Infer~\cite{infer-automatic-program-calcagno-2011,finding-real-bugs-le-2022} and Gillian~\cite{gillian-part-multilanguage-fragososantos-2020,gillian-part-ii-maksimovic-2021,compositional-symbolic-execution-loow-2024}. 
\infertool is an industrial-strength tool by Meta that supports multiple languages. 
However, \infertool is based on a single IL, SIL, leading to complex compilers. 
As such, extending \infertool to a new language often requires modifying \infertool itself by writing `models' for some functions directly in \ocaml.
Gillian is the framework that is  closest to our goal of adapting the tool to each language. 
Gillian allows one to override the notion of memory model directly in \ocaml for each source language. 
However, Gillian still relies on an GIL, its IL, and each language instantiation must provide both a complex compiler to its IL and an  \ocaml implementation of all state operations. 
In addition, Gillian's expression language is fixed and captures constructs that must be compatible with both JavaScript and C, leading to a complex expression language that still requires constant decoding and re-encoding of values (which is our leading hypothesis for its performance compared to \sot).

\ifbool{reviewversion}{}{
\section*{Data-Availability Statement}
We provide the full implementation of \sot described in this paper, as well as all the benchmarks and scripts used to reproduce the results in this paper in the accompanying artifact, available at \url{https://doi.org/10.5281/zenodo.19080424}~\cite{artifact}.}

\begin{acks}
This project was sponsored by the Defense Advanced Research Projects Agency, (DARPA),
Information Innovation Office (I2O), Program BAA HR001124S0003, under Cooperative Agreement
No. HR00112420359.  Disclaimer: The content of the information does not necessarily reflect the
position or the policy of the U.S. Government, and no official endorsement should be inferred. 

Raad is further supported by the UKRI Future Leaders Fellowship MR/V024299/1, the EPSRC grant
EP/X037029/1 and VeTSS.
We would like to thank Guillaume Boisseau for his help with making Charon work for \rusteria; 
Petar Maksimovi\'{c} for his feedback on the earlier drafts of this manuscript; 
the Cerberus developers for their help with setting up, using and improving Cerberus; 
Peter Sewell for hosting Ayoun as a visiting researcher at the University of Cambridge; 
and Peter O'Hearn for his guidance throughout this project.
Finally, we thank the PLDI 2026 reviewers for their feedback and suggestions,
which have greatly improved the quality of this paper.
\end{acks}

\bibliographystyle{ACM-Reference-Format}
\bibliography{soteria}

\ifbool{extendedversion}{
\appendix

\clearpage
\section{The Simple \simlang Language}
\label{app:simple-lang}

We now present the simple programming language \simlang used throughout this paper to illustrate our concepts.
\simlang is a ML-like, expression-based language with a small set of constructs, to illustrate a full symbolic execution engine.

We first present the abstract syntax and concrete semantics of \simlang in \cref{sec:simple-lang-semantics}. We then present how to implement a symbolic execution engine for \simlang using \sot in \cref{sec:symex-simple-lang}, while giving a brief introduction to symbolic execution and monads along the way.

\subsection{\simlang Semantics and Interpreter}
\label{sec:simple-lang-semantics}

For uniformity, we define the abstract syntax of \simlang using \ocaml notation in \cref{fig:lang} (above).
\simlang has two kinds of constants (integers, \myinline{Z.t}, being the type of unbounded integers,  and booleans) a few binary operators (addition, division, equality and boolean conjunction), and six kinds of expressions: constants, variables, binary operators, let-binding, if-then-else expressions, a non-deterministic integer expression that can evaluate to any integer, and an error expression that terminates evaluation in error mode if its argument evaluates to false.

\begin{figure}[!t]
\hfill
\begin{lstlisting}
type const = Int of Z.t  | Bool of bool
type binop = Add | Sub | And | Div | Eq | Geq
type expr = Const of const | Var of string | BinOp of binop * expr * expr
            | Let of string * expr * expr | If of expr * expr * expr
            | NondetInt | Assert of expr
\end{lstlisting}

\begin{mathpar}
\inferrule[If-Then]{
  \semtrans{\subst}{g}{\ook: \vtrue}\quad
  \semtrans{\subst}{\expr_1}{\outcome: \val}
}{
  \semtrans{\subst}{\ite{g}{\expr_1}{\expr_2}}{\outcome: \val}
}%
\and%
\inferrule[If-Else]{
  \semtrans{\subst}{g}{\ook: \vfalse}\quad
  \semtrans{\subst}{\expr_2}{\outcome: \val}
}{
  \semtrans{\subst}{\ite{g}{\expr_1}{\expr_2}}{\outcome: \val}
}%
\and%
\inferrule[Let-Success]{
  \semtrans{\subst}{e1}{\ook: \val}\quad
  \subst' = \subst\mapupdate{\px}{\val}\\\\
  \semtrans{\subst'}{e2}{\outcome: \val'}
}{
  \semtrans{\subst}{\letin{\px}{\expr_1}{\expr_2}}{\outcome: \val'}
}%
\and%
\inferrule[Let-Error]{
  \semtrans{\subst}{\expr_1}{\oerr: \val}
}{
  \semtrans{\subst}{\letin{\px}{\expr_1}{\expr_2}}{\oerr: \val}
}%
\and%
\inferrule[NondetInt]{
  z \in \ints
}{
  \semtrans{\subst}{\nondetint}{\ook: z}
}
\end{mathpar}
\vspace{-10pt}
\caption{The \simlang abstract syntax tree (above); an excerpt of its big-step operational semantics (below)}
\label{fig:lang}
\label{fig:simple-conc-sem}%
\end{figure}

We present an excerpt of the straightforward big-step semantics of \simlang in \cref{fig:simple-conc-sem} (below).
A semantic judgement $\semtrans{\subst}{\expr}{\outcome:v}$ states that evaluating expression $\expr$ under substitution $\subst$ yields value $v$ with outcome $\outcome$, where $\subst : $ \myinline{subst = (var, value)  map} is a partial map from program variables to their values, $o :$ \myinline{(value, error) result} indicates terminating either successfully returning a value denoted by $\ook~\val$, or  erroneously carrying an error denoted by $\oerr~\err$, where \myinline{value} $\approx$ \myinline{const}.

These five rules highlight central concepts to defining most program semantics: \emph{conditional choice}, \emph{sequential composition}, \emph{error propagation} and \emph{non-determinism}.
Conditional choice is the ability to choose between execution paths based on a condition.
Specifically, the \inferref{If-Then} and \inferref{If-Else} rules apply when the condition evaluates to \myinline{true} and \myinline{false}, respectively.
Sequential composition is captured by the \inferref{Let-Success} rule,  requiring sub-expression $\expr_1$ to be evaluated \emph{before} sub-expression $\expr_2$, as the latter is evaluated in the context of a substitution that depends on the result of the former.
\inferref{Let-Error} showcases a simple model of error propagation: an erroneous evaluation is propagated and terminates the entire program (`short-circuits').
Finally, \inferref{NondetInt} models non-determinism: evaluating $\nondetint$ may yield \emph{any} integer; \ie there is an infinite number of possible results.
While real programming language rarely have such a construct, non-determinism can be used to model \eg indeterminate user input, or whether \myinline{malloc} should return a \nullptr pointer on allocation.

\subsection{Symbolic Execution Monad for \simlang}
\label{sec:symex-simple-lang}
Before defining our \emph{symbolic} execution monad, we present the intuition behind the \emph{concrete} execution monad underpinning expression evaluation rules in \cref{fig:simple-conc-sem}.
We then present the \ocaml feature that enables defining custom let-binding operators, allowing us to write the rules above \emph{succinctly}.

\subsubsection{The Problem}
The semantics in \cref{fig:simple-conc-sem} is defined as a relation using inference rules.
While common and useful when proving properties about the semantics, it is less readable to the common developer and less straightforward to implement.
As our goal is to implement a symbolic interpreter, we should ideally re-define the semantics as a \emph{function} as in the following \ocaml snippet:
\begin{lstlisting}
let rec eval (subst: subst) (e: expr) : (Value.t, error) Result.t =
  match e with
  | Let (x, e1, e2) ->
    let v1 = eval subst e1 in
    let subst' = Subst.set subst x v1 in
    eval subst' e2
  | ... -> ...
\end{lstlisting}

Unfortunately, while this reads naturally, it does not model the semantics in \cref{fig:simple-conc-sem} accurately: it does not capture that evaluating \myinline{e1} may fail or yield multiple values. Cue the \emph{execution monad}.

\subsubsection{Execution Monad}
The semantics judgement in \cref{fig:simple-conc-sem} can also be defined as a function: \myinline{eval : subst -> expr -> (value, error) result set}, mapping an expression and substitution a \emph{set} of possible outcomes.
This set is, in general, non-computable; \eg evaluating $\nondetint$ under any substitution returns the infinite set $\left\{\ook~z \mid z \in \ints \right\}$.
The return type of the above \myinline{eval} function is a monad, which for the sake of this presentation we name the \emph{execution monad}:%
\footnote{This execution monad is trivially obtained by applying the result monad transformer to the non-determinism (set) monad,y}
\begin{lstlisting}
type 'a exec = ('a, error) result set
\end{lstlisting}
%
We proceed with an intuitive description of what a monad is. 

\subsubsection{An Introduction to Monads}
A monad is a parametrised type \myinline{'a t} with two operations:
\begin{lstlisting}
val return : 'a -> 'a t
val bind : 'a t -> ('a -> 'b t) -> 'b t
\end{lstlisting}
where \myinline{return} characterises a \emph{pure} operation that lifts any value to the monad, and \myinline{bind x f} receives \myinline{x} (the result of a monadic computation) and \myinline{f} (a monadic computation) and \emph{composes} the two.
These two operations must satisfy \emph{monad laws}, stating natural properties such as computation associativity.
We elide these laws as there is extensive descriptions of monads in the literature.
For the monad presented in this section, we prove in \cref{app:formalisation} that the monad laws hold.

For our execution monad, \myinline{return} is the function that receives a value $v$ and returns the singleton set containing $\ook~v$.
It is `pure' in that no effect -- non-determinism or error -- is performed.
The \myinline{bind} operation receives a set of results and applies \myinline{f} to each successful result, collecting their results.
If any of the branches yields an error, the error is simply propagated:

\vspace{-1em}
\[
\mathtt{bind}~\mathtt{x}~\mathtt{f} = \eqnmark{node1}{\left\{ \outcome' \mid \exists \val.~ \ook~\val \in \mathtt{x} \land \outcome' \in \mathtt{f}~\val \right\}} \cup \eqnmark{node2}{\left\{ \oerr~\err \mid \oerr~\err \in \mathtt{x} \right\}}
\]
\setannotstyle{color=softgreen}
\annotate{below,left}{node1}{$\mathtt{f}$ is applied to each successful result}
\setannotstyle{color=red}
\annotate{below,right}{node2}{errors are propagated}
\vspace{0.3em}

\noindent With this monadic structure, we can re-write the rules from \cref{fig:simple-conc-sem} as follows:

\begin{lstlisting}
let rec eval subst expr =
  match expr with
  | Let (x, e1, e2) ->
    bind
      (eval subst e1)
      (fun v ->
        let subst' = Subst.set subst x v in
        eval subst' e2)
  | ... -> ...
\end{lstlisting}

\subsubsection{Custom Let Operators in \ocaml}
While the above is more readable to an average (\ocaml) developer,  it is still not very readable as calls to \myinline{bind} create visual noise.
\ocaml, inspired by Haskell's do-notations, provides a syntactic sugar for monadic binding:
\begin{lstlisting}[escapeinside={&}{&}]
&\tikzmark{let-bind-def-start}&let ( let* ) = bind&\tikzmark{let-bind-def-end}&
let rec eval subst expr =
  match expr with
  | Let (x, e1, e2) ->
    &\tikzmark{let-star-start}&let*&\tikzmark{let-star-end}& v1 = eval subst e1 in
    let subst' = Subst.set subst x v1 in
    eval subst' e2
  | ... -> ...
\end{lstlisting}
\tikz[overlay,remember picture]{
  \draw[fill=softgreen!80,draw=none,opacity=.3]
    ($(pic cs:let-bind-def-start) + (-.2em,.9em)$) rectangle ($(pic cs:let-bind-def-end) + (.2em,-.4em)$);
}%
\tikz[overlay,remember picture]{
  \draw[fill=softgreen!80,draw=none,opacity=.3]
    ($(pic cs:let-star-start) + (-.2em,.9em)$) rectangle ($(pic cs:let-star-end) + (.2em,-.4em)$);
}%

This syntactic sugar allows us to write the code in a sequential style, elegantly hiding the monadic structure and making it natural to read, as shown above.
Now that we can write the semantics of \simlang as a function, we next show how \sot can help us write a symbolic interpreter.

\subsubsection{A Primer on Symbolic Execution}\label{sec:symex-intro}
Symbolic execution is a program analysis technique that specifically addresses the problem of non-determinism in testing.
It enables testing of programs that either contain natural non-determinism (such as allocation in C which may non-deterministically%
\footnote{Allocation is not non-deterministic and only fails when the OS is unable to provide the required memory; however,  from the point of view of exhaustive testing, it can be considered non-deterministic.}
fail to allocate), or artificial non-determinism introduced to extend input coverage.

For instance, CBMC, KLEE, Gillian-C or Kani all enhance their target language with a function that creates a non-deterministic value, similar to our $\nondetint$ construct in \simlang.
This construct can be used to test properties across all possible inputs of a program with the following simple test:
\begin{lstlisting}
let y = NondetInt in
let v = if y < 0 then 0 - y else y in
assert (v >= 0)
\end{lstlisting}

Unfortunately, the above test is not executable as it requires exploring an infinite number of cases.
Symbolic execution addresses this by abstracting over non-determinism through the use of \emph{symbolic values}, \ie values that depend on \emph{symbolic variables}.
Throughout the execution, a symbolic interpreter uses an SMT solver~\cite{z3-efficient-smt-demoura-2008, cvc5-versatile-industrialstrength-barbosa-2022} to prune out infeasible paths.

For instance, the expression presented in the above simple test starts from assigning a non-deterministic integer to \myinline{y}, binds its absolute value to \myinline{v}, and asserts that \myinline{v} is non-negative.
A symbolic execution of this expression starts by assigning a symbolic variable $\symb{y}$ to program variable \myinline{y}.
The evaluation of \myinline{if} expression starts by querying an SMT solver to check whether the condition \myinline{y < 0} (which evaluates to the symbolic value $\symb{y} < 0$) is satisfiable.
Since it is,  \ie there exist an integer that is strictly negative, the symbolic execution engine evaluates the `then' \emph{branch}.
In addition, since the negation of the condition ($\symb{y} \geq 0$) is also satisfiable, the symbolic execution engine also evaluates the `else' branch.
In each execution branch, the symbolic execution engine maintains a \emph{path condition} recording the conditions under which the branch is executed.
Subsequently, the symbolic execution engine evaluates the \myinline{assert (v >= 0)} expression in each execution branch, ensuring that \myinline{v >= 0} holds by checking that its negation (\myinline{v < 0}) is unsatisfiable.
Since it is unsatisfiable in both branches, the assertion is verified \emph{for all possible inputs}.
\sot allows us to implement a symbolic execution engine that performs exactly this kind of execution, as we describe below.

\subsubsection{Symbolic Values and Solvers}

The first step to using \sot is to instantiate its symbolic execution monad with a \myinline{Solver}.
A \myinline{Solver} is a module that provides
\begin{enumerate*}
	\item a \emph{symbolic} value language, and
	\item a solver interface to check satisfiability for the subset of these symbolic values that are boolean expressions.
\end{enumerate*}
This parametricity allows \sot to be used with any abstraction of symbolic values (\ie the set of values over which symbolic variables range) and with any encoding to a solver (\eg an off-the-shelf SMT solver such as Z3, a custom solver, or any combination thereof).

In our experience using other tools such as Gillian~\cite{gillian-part-multilanguage-fragososantos-2020,gillian-part-ii-maksimovic-2021}, trying to support multiple languages within a single set of symbolic values (through a single intermediate language) leads to complex languages that are the source of bugs and performance issues in the symbolic execution engine and make the intermediate language an error-prone target for front-ends.
However, two languages that operate on a similar level of abstraction, such as C and Rust, can easily share a symbolic value language.

For this presentation we re-use \myinline{Bv_solver}, one of our pre-built solvers that come with the \sot base library.
\myinline{Bv_solver} provides a simple symbolic value language that is suitable for C-like languages, using an efficient incremental interface, built-in simplifications with Z3 as its underlying SMT solver.
Symbolic variables range over integers, booleans, bit-vectors and pointers (a pair of integers denoting an allocation identifier and offset).

\begin{lstlisting}
module Symex = Soteria.Symex.Make(Soteria.Bv_solver.Bv_values)
\end{lstlisting}\vspace{-.5em}

\subsubsection{Symbolic Interpreter}
The \myinline{Symex} module exposes a symbolic execution monad over the values of \myinline{Bv_values}, as well as all primitives required to implement a symbolic interpreter.
In particular, while the \myinline{Symex} monad is an abstraction over the non-determinism monad presented above, the \myinline{Symex.Result} module provides an abstraction over the execution monad; that is, the monad that captures both non-determinism and error propagation.

\Cref{fig:comparison-eval} presents the (monadic) \emph{concrete} evaluation of the \myinline{If} and \myinline{Let} constructs of \simlang implemented using \sot (left) alongside its \emph{symbolic} counterpart (right).
Notably, the two definitions are almost identical, except that (as highlighted):
\begin{enumerate*}
  \item the concrete interpreter uses the bind operation for \myinline{ExecutionMonad} while the symbolic interpreter uses \myinline{Symex.Result}, which performs symbolic execution and stops each execution path upon error;
  \item the concrete interpreter uses the native \myinline{if} of \ocaml to branch on a concrete boolean, while the symbolic interpreter uses \sot's \myinline{if\%sat}, which branches on a symbolic boolean by consulting the solver and explores a branch whenever its path condition is satisfiable (both branches may be taken if both conditions are satisfiable).
\end{enumerate*}

\begin{figure}%
\begin{minipage}{0.48\textwidth}
\begin{lstlisting}[escapeinside={&}{&}]
&\tikzmark{conc-let-start}&let ( let* ) = ExecutionMonad.bind&\tikzmark{conc-let-end}&
...
| If (cond, then_e, else_e) ->
  let* cond_v = eval subst cond in
  if cond_v
  then eval subst then_e
  else eval subst else_e
| Let (x, e1, e2) ->
  let* v1 = eval subst e1 in
  let subst' =
    Subst.set subst x v1 in
  eval subst' e2
\end{lstlisting}%
\end{minipage}%
\hfill%
\begin{minipage}{0.48\textwidth}
\begin{lstlisting}[escapeinside={&}{&}]
&\tikzmark{sym-let-start}&let ( let* ) = Symex.Result.bind&\tikzmark{sym-let-end}&
...
| If (cond, then_e, else_e) ->
  let* cond_v = eval subst cond in
  if&\tikzmark{sat-start}&%sat&\tikzmark{sat-end}& cond_v
  then eval subst then_e
  else eval subst else_e
| Let (x, e1, e2) ->
  let* v1 = eval subst e1 in
  let subst' =
    Subst.set subst x v1 in
  eval subst' e2
\end{lstlisting}%
\end{minipage}%
\tikz[overlay,remember picture]{
  \draw[fill=red!80,draw=none,opacity=.3]
    ($(pic cs:conc-let-start) + (-.2em,.9em)$) rectangle ($(pic cs:conc-let-end) + (.2em,-.4em)$);
}%
\tikz[overlay,remember picture]{
  \draw[fill=softgreen!80,draw=none,opacity=.3]
    ($(pic cs:sym-let-start) + (-.2em,.9em)$) rectangle ($(pic cs:sym-let-end) + (.2em,-.4em)$);
}%
\tikz[overlay,remember picture]{
  \draw[fill=softgreen!80,draw=none,opacity=.3]
    ($(pic cs:sat-start) + (0em,.9em)$) rectangle ($(pic cs:sat-end) + (0em,-.4em)$);
}%
\vspace{-1em}%
\caption{Concrete (left) and symbolic (right) evaluation}\label{fig:comparison-eval}%
\end{figure}

In \cref{fig:expr-eval}, we present the full symbolic expression evaluation function for \simlang.
The function \myinline{eval} takes as input a substitution mapping program variables to symbolic values, and an expression to evaluate, and returns a symbolic computation yielding either a symbolic value or an error.
We assume the existence of a module \myinline{Value} representing symbolic values, with standard operations. \myinline{Value.Infix} provides infix operators for symbolic values, such as \myinline{( + ) : Value.t -> Value.t -> Value.t}, which are used in \myinline{eval_binop} to symbolically evaluate binary operations while having convenient, concrete-like syntax.

\begin{figure}[!t]
\begin{lstlisting}
module Subst = Map.Make(String)
type subst = Value.t Subst.t

type error = DivisionByZero | AssertError

let value_of_const (c: const) : Value.t =
  match c with
  | Int z -> Value.int z
  | Bool b -> Value.bool b

let eval_binop
  (op: binop) (v1: Value.t) (v2: Value.t) : (Value.t, error) Result.t Symex.t =
  let open Value.Infix in
  match op with
  | Add -> ok (v1 + v2)
  | And -> ok (v1 && v2)
  | Div ->
    if%sat v2 == 0s then error DivisionByZero
    else ok (v1 / v2)
  | Eq -> ok (v1 == v2)
  | Geq -> ok (v1 >= v2)

let rec eval
  (subst: subst) (e: expr) : (Value.t, error) Result.t Symex.t =
  match e with
  | Const c -> ok (value_of_const c)
  | Var x -> ok (Subst.get subst x)
  | BinOp (op, e1, e2) ->
    let* v1 = eval subst e1 in
    let* v2 = eval subst e2 in
    eval_binop op v1 v2
  | Let (x, e1, e2) ->
    let* v1 = eval subst e1 in
    let subst' = Subst.set subst x v1 in
    eval subst' e2
  | If (cond, then_e, else_e) ->
    let* cond_v = eval subst cond in
    if%sat cond_v then eval subst then_e
    else eval subst else_e
  | NondetInt -> nondet Int
  | Assert e ->
    let* v = eval subst e in
    if%sat v then ok 0s
    else error AssertError
\end{lstlisting}
\caption{Full symbolic evaluation function}\label{fig:expr-eval}
\end{figure}

\subsubsection{Guarantees}

The \myinline{Symex} monad provides soundness guarantees with respect to the non-determinism monad, which we informally describe here and prove formally in \cref{app:formalisation}.
For instance:
\begin{itemize}
  \item If two symbolic computations \myinline{fs} and \myinline{gs} are respectively sound against two non-deterministic computations \myinline{fc} and \myinline{gc}, then the composition of \myinline{fs} and \myinline{gs} within the symbolic monad (using the bind operator) is sound against the composition of \myinline{fc} and \myinline{gc} within the non-determinism monad;
  \item The function that branches on a symbolic boolean value using \myinline{if\%sat} is sound against the function that branches on a concrete condition using standard \myinline{if}.
\end{itemize}
By ensuring that each primitive of the \myinline{Symex} monad is sound against its concrete counterpart, and that these primitives compose soundly, \sot helps users build sound interpreters. Users still need to ensure that they indeed compose only sound computations, but this is no more difficult, in our opinion, than writing a correct compiler.


\clearpage
\section{\sot: Supplementary Implementation Details}
\label{app:imp}

\subsection{Detailed implementation of the branching construct}\label{app:branching-impl}

We now provide a detailed description of the implementation of the \myinline{branch_on} operation, and showcase some optimisations and integration with other \sot utilities.

\begin{figure}[!h]
\begin{lstlisting}[numbers=left,xleftmargin=2em,escapechar=^]
(* `if%sat guard then then_ else else_`
    is syntactic sugar for `branch_on guard then_ else_` *)
let branch_on guard then_ else_ = fun f ->
  let guard = Solver.simplify guard in ^\label{line:simplify-guard}^
  match Value.as_bool guard with^\label{line:as-bool-start}^
  | Some true -> then_ () f
  | Some false -> else_ () f^\label{line:as-bool-end}^
  | None ->
    let left_unsat = ref false in
    Symex_state.save ();^\label{line:sat-save}^
    L.with_section "left branch" (fun () ->
      Solver.add_constraints [ guard ];^\label{line:add-guard}^
      let sat_res = Solver.sat () in^\label{line:left-guard}^
      left_unsat := Solver_result.is_unsat sat_res;^\label{line:save-if-unsat}^
      if Solver_result.is_sat sat_res then then_ () f^\label{line:exec-left}^);
    Symex_state.backtrack_n 1;
    L.with_section "right branch" (fun () ->
      Solver.add_constraints [ Value.(not guard) ];
      if !left_unsat then else_ () f
      else
        match Fuel.consume_branching 1 with
        | Exhausted -> Stats.As_ctx.add_unexplored_branches 1
        | Not_exhausted ->
          Stats.As_ctx.add_branches 1;
          if Solver_result.is_sat (Solver.sat ()) then else_ () f)
\end{lstlisting}\vspace{-1em}
\caption{Implementation of the \myinline{if\%sat} (syntactic sugar for \lstinline|branch_on|) construct.}
\label{fig:branch_on}
\label{fig:if_sat}
\end{figure}

The \myinline{branch_on} function receives a guard and two thunks, \myinline{then_} and \myinline{else_}, both returning a symbolic execution monad, as shown in \cref{fig:branch_on}.
We first simplify the guard on line~\ref{line:simplify-guard} using the solver's simplification procedure.
The solving procedure of \myinline{Solver} should be simple and efficient and need not be complete.
In fact, implementing it as the identity function is a valid implementation.
In \cref{sec:implem} we detail the simplification procedure of \myinline{BV_Solver} offered by \sot.
If the solver can simplify the guard to \myinline{true} or \myinline{false}, we can immediately execute the corresponding branch (lines~\ref{line:as-bool-start}--\ref{line:as-bool-end}).

Solvers can often simplify constraints, \eg if the guard or its negation is already part of the current path condition, which happens very often in practice.
This simple optimisation can save a lot of time, as it avoids unnecessary calls to the solver.
In the Collections-C case study presented in \cref{sec:evaluation-c}, out of 67,323 calls to \myinline{branch_on}: 48,112 (71\%) guards were already booleans (due to earlier reductions, or because they are concrete), 6,747 (10\%) guards were simplified to a concrete boolean value by the solver, and 9,002 (13\%) guard terms or their negation were found in the path condition. Overall, only 17,442 $\sat$ checks were performed, which translates into a significant speedup.

If the guard cannot be simplified to a concrete boolean value, we proceed with the symbolic execution of both branches.
To do so, we first save the current path condition (line~\ref{line:sat-save}).
We then execute the left branch by adding the guard to the path condition (line~\ref{line:add-guard}) and calling \myinline{Solver.sat} to check if the current path condition is satisfiable.
Importantly, we record a boolean capturing if the branch is $\sat$ or $\unsat$ (it could also be $\unknown$).
If the branch is $\sat$, we execute the \myinline{then_} thunk (line~\ref{line:exec-left}).
Note that the entire execution of the left branch is wrapped in a call to \myinline{L.with_section}, which ensures that all log messages generated during the execution of the then branch are grouped together in a collapsible section in the log file, making it easier to navigate individual branches in the logs, as presented in \cref{app:logging-impl}.

After executing the then branch (when the guard is $\sat$), we backtrack to the saved state and analogously execute the else branch, with two key differences.
First, if the then branch was $\unsat$, we are guaranteed that the else branch is $\sat$ ($\unsat(\pc) \implies \sat (\neg \pc)$, for all $\pc$), and hence do not need to call the solver again, thus saving a potentially expensive call to the solver.
Second, if the then branch was executed, we consume one unit of branching fuel and stop if we are out of fuel.
\sot comes with a simple built-in fuel mechanism to limit the number of branches explored and steps performed during symbolic execution (where default fuel is infinite).
If the branch is not executed due to lack of fuel, we record this in the execution statistics.

\subsection{Logging in \sot}
\label{app:logging-impl}

\sot is designed to come with \emph{batteries included}, and provides out-of-the-box support for logging and statistics.
In fact, logs in \sot are often more informative, as they relate directly to the relevant source code, rather than to the intermediate language.
Logging is directly integrated in the symbolic execution: users can log messages at any point of their programs, and \sot can create an HTML report that groups logs by execution path.

For instance, adding the following logging statement to our symbolic interpreter (above) produces the HTML report shown below in \cref{fig:soteria-logging}.

\begin{figure}[!h]
\vspace{-8pt}
\begin{lstlisting}
let rec eval subst expr =
  L.verbose (fun m -> m
    "Evaluating expression: %a"
    Expr.pp expr);
  match expr with
  | ... -> ...
\end{lstlisting}%
\includegraphics[width=0.63\textwidth]{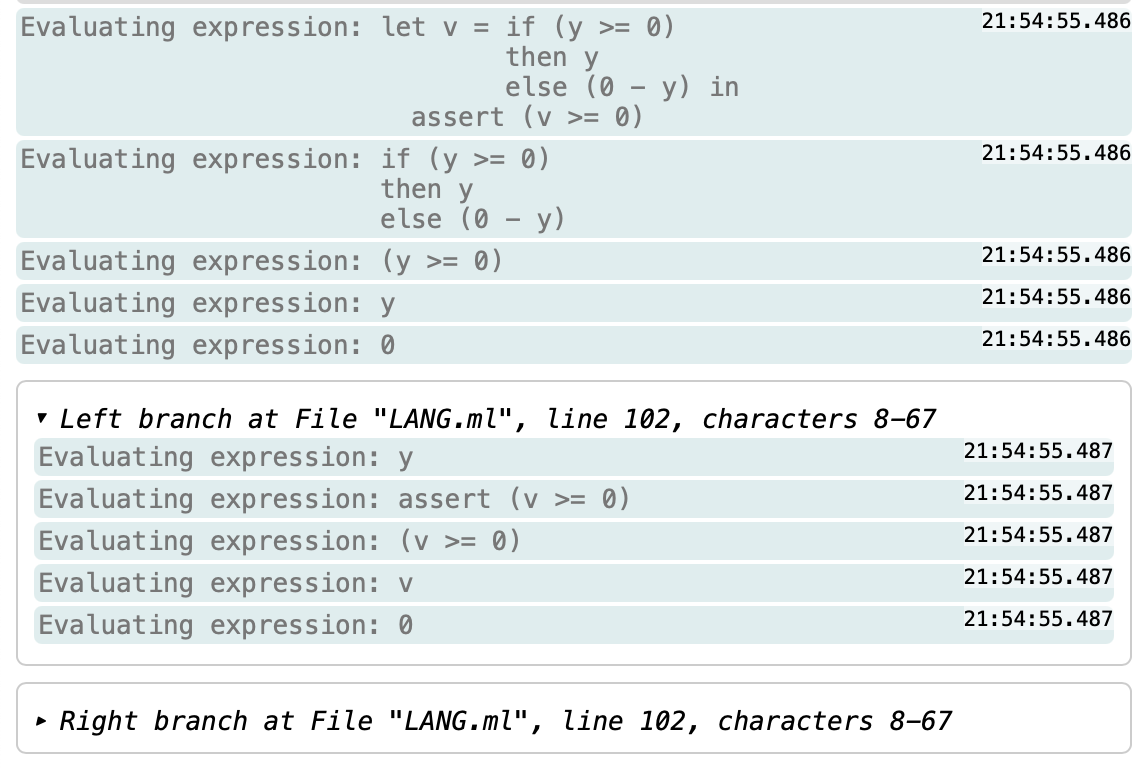}%
\vspace{-10pt}
\caption{Example of logging in \sot}\label{fig:soteria-logging}
\end{figure}

The logging API is provided out of the box by \sot, and provides multiple logging levels (error, warning, info, debug, trace, SMT).
\section{A Simple BTreeMap Example for \rusteria}
\label{app:btreemap}

In this appendix, we present the details a simple canonical example of a bounded proof, which inserts $n$ nondeterministic elements into a BTreeMap and checks that it is well-sorted. We compare the performance of Soteria and Kani on this example for various values of $n$, noting that Soteria performs more checks than Kani, e.g. uninit. memory and TreeBorrows.

\begin{table}[h]
  \centering
  \begin{tabular}{rrrr}
    \toprule
    $n$ & \rusteria time & branches & Kani time \\
    \midrule
    1 & 1.55s  &    1 & 2.70s    \\
    2 & 1.62s  &    3 & 9.09s    \\
    3 & 2.09s  &   13 & 891.65s  \\
    4 & 6.09s  &   75 & $>$2h    \\
    5 & 43.91s &  541 & $>$2h    \\
    6 & 456.30s & 4683 & $>$2h   \\
    \bottomrule
  \end{tabular}
  \caption{Performance of \rusteria and Kani on the \myinline{BTreeSet} benchmark
    (inserting $n$ nondeterministic integers and checking sortedness).
    \rusteria additionally checks uninitialised memory and Tree Borrows violations.}
  \label{tab:btreemap}
\end{table}

\begin{figure}
\begin{lstlisting}[language=Rust,style=ruststyle]
// btreeset.rs
use std::collections::BTreeSet;
const BOUND: usize = 2;

#[kani::proof]
#[kani::unwind(3)]
fn check_sorted() {
    let mut map: BTreeSet<i32> = BTreeSet::new();
    for _ in 0..BOUND {
        let k: i32 = kani::any();
        map.insert(k);
    }
    let mut prev: Option<i32> = None;
    for &k in map.iter() {
        if let Some(p) = prev {
            assert!(p < k);
        }
        prev = Some(k);
    }
}
\end{lstlisting}
\caption{The \myinline{test_btreeset.rs} file.}\label{lst:btreemap-test}
\end{figure}

The complete test file is given in \cref{lst:btreemap-test}. We insert \myinline{BOUND} nondeterministic elements into a \myinline{BTreeSet} (an ordered binary tree), which causes the engine to explore all potential orderings of the nondeterministic values.
This quickly causes an explosion in the number of cases considered, as demonstrated by the branch counts in \Cref{tab:btreemap}.

\begin{itemize}
  \item The variable $n$ in \Cref{tab:btreemap} corresponds to the \myinline{BOUND} variable in the code above.
  \item For each \myinline{BOUND}, we set the \myinline{kani::unwind} bound to \myinline{BOUND + 1}, the minimum unwinding required for the proof to pass, so that Kani does no additional work. This annotation is ignored by \rusteria, which always performs the minimal work.
  \item Both Kani and \rusteria timings include compilation from Rust to MIR, which takes 1.2--1.5\,s in \rusteria (explaining why $n=1$ and $n=2$ are so close; the respective analysis times are 0.1\,s and 0.17\,s).
  \item This test uses the \myinline{BTreeSet} implementation from the Rust standard library.
\end{itemize}

We also provide additional statistics for the \myinline{BOUND=6} case, produced by the \myinline{--dump-stats} flag of \rusteria via the \myinline{Soteria.Stats} module:
\begin{lstlisting}
{
"solvers.z3.check_sats":        20867,  // sat checks reaching Z3
"soteria.branch-on-calls":  313870067,  // calls to if%sat
"soteria.branches":              4683,  // feasible branches taken
"soteria.exec-time":          472.26s,  // total execution time (excl. parsing)
"soteria.data.pmap.lookups": 19650835,  // calls to Data.Map.find_opt
"soteria.steps":              2525904,  // MIR blocks executed
"soteria-rust.function_calls": 604368,  // function calls executed
"soteria.sat-checks":           22358,  // sat checks (incl. pre-Z3 solving)
"soteria.sat-time":           110.12s,  // total time spent in Z3
}
\end{lstlisting}

For reference, we also give part of the output of Kani for \myinline{BOUND=3}, which illustrates the complexity of the SAT instance it generates:
\begin{lstlisting}
Runtime Symex: 12.7424s
size of program expression: 532625 steps
slicing removed 369261 assignments
Generated 17577 VCC(s), 9365 remaining after simplification
Runtime Postprocess Equation: 0.20912s
converting SSA:    Runtime Convert SSA: 10.1436s
Runtime Post-process: 27.0632s
Solving with CaDiCaL 2.0.0
50374973 variables, 254600597 clauses
SAT checker: SATISFIABLE  Runtime Solver: 113.015s
Solving with CaDiCaL 2.0.0
50374974 variables, 254600598 clauses
SAT checker: SATISFIABLE  Runtime Solver: 102.966s
Solving with CaDiCaL 2.0.0
50374975 variables, 254600599 clauses
SAT checker: UNSATISFIABLE  Runtime Solver: 691.116s
\end{lstlisting}
\clearpage
\section{\rusteria: Supplementary Implementation and Evaluation Details}
\label{app:rusteria}
\subsection{\rusteria Value language}\label{sec:rusteria-values}

The value language for \rusteria is defined in \cref{fig:rust-values}. It is parameterised by a type \myinline{'ptr} representing pointers; to allow multiple state model implementations. The currently definition of \myinline{'ptr} in \rusteria is presented in \Cref{subsec:rusteria_engine}, and is included below for convenience.

\begin{figure}[h]
\begin{lstlisting}[escapeinside={&}{&}]
type 'ptr meta = Thin | Len of BV_value.(int t) | VTable of 'ptr

type 'ptr t =
  | Base of BV_value.(val t)  | Ptr of 'ptr * 'ptr meta
  | Tuple of 'ptr t list      | Union of Types.field_id * 'ptr t
  | Enum of BV_value.(val t) * 'ptr t list
  | ConstFn of Types.fn_ptr

type ptr_&\sot{\textsuperscript{\textsc{rust}}}& = {
  ptr : BV_value.(ptr t);
  tag : Tree_borrow.tag;
  align : BV_value.(nonzero t);
  size : BV_value.(int t);
}
\end{lstlisting}
\caption{Rust values}\label{fig:rust-values}
\end{figure}

In the \myinline{Union} case, \myinline{Types.field_id} refers to what field of the union's definition the value refers to, which is needed to not lose typing information --- this is an implementation detail, and future work will aim to remove it and replace it with a list of possibly uninitialised bytes with provenance, akin to MiniRust~\cite{minirust-jung-2025}.
In the \myinline{ConstFn} case, \myinline{Types.fn_ptr} is a function identifier (and not a pointer, despite the name) used to look up the function definition in the program being symbolically executed.

Our value language is similar to that of MiniRust \cite{minirust-jung-2025}, as an attempt to model Rust's semantics faithfully, the exceptions being unions not represented faithfully, and the lack of floats or function references in MiniRust.

\subsection{AI Standard Library Test Prompt}\label{app:rusteria-tests-prompt}

To remove human bias from the test writing process, we used an AI pair programmer to write the tests for \rusteria.
We used the same prompt for each data structure tested.
The prompt we used is as follows:

{\itshape
Using the Kani library, write symbolic tests for the \myinline{std::collections::DataStructure} data structure in Rust.
The library is provided.
You can use \myinline{kani::any()} to generate non-deterministic variables of a given type and \myinline{kani::assume()} to assume properties in the symbolic execution.
Use \myinline{assert!()} to check for the desired properties.

Separate each test in a different function, with the \myinline{#[kani::proof]} attribute.
You may also define failing tests, by adding \myinline{#[kani::should_panic]}, to test behaviour of the \myinline{DataStructure} that would panic.
Define a function \myinline{any_datastructure<T, SIZE>() -> DataStructure<T>} to generate a non-deterministic \myinline{DataStructure} that you use in the tests.
Do not tailor the test to Kani's bounded nature, just write the tests in an intuitive way.
}
\clearpage
\section{Formalisation}\label{app:formalisation}


\paragraph[0pt]{Symbolic Interpretation Extension}
Let $\sint_1, \sint_2$ be symbolic interpretations; $\sint_2$ is an \emph{extension} of $\sint_1$, written $\sint_2 \geq \sint_1$, if $\for{\sx, v} \sint_1(\sx) = \val \Rightarrow \sint_2(\sx) = \val$.

\subsection{Monad Laws}

We show that our symbolic execution monad satisfies the three monad laws.

\begin{definition}[Symbolic Execution Monad]\label{def:symex-monad}
We define the symbolic execution monad as the triple $(\symexM, \mathtt{return}, \mathtt{bind})$ as follows:

\[
\begin{array}{rcl}
	\symexM(A) & \eqdef & \pcs \rightarrow \poset(\braket{\symb{A}}{\pcs})  \\
    \mathtt{return}(a) & \eqdef & \lambda \pc.~\{ \braket{a}{\pc} \} \\
	\mathtt{bind}(m,f) & \eqdef & \lambda \pc.~ \{ \braket{b}{\pc''} \mid \braket{a}{\pc'} \in m(\pc) \land \braket{b}{\pc'} \in f(a, \pc') \}
\end{array}
\]
\end{definition}

\begin{lemma}[Symbolic execution monad: Monad Laws]\label{lem:symex-monad-laws}
The symbolic execution induced by the bind operator defined in \Cref{def:symex-monad} satisfies the monad laws.
\end{lemma}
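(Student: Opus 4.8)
The plan is to unfold the definitions of $\mathtt{return}$ and $\mathtt{bind}$ from \Cref{def:symex-monad} and verify each of the three monad laws by a direct appeal to extensionality: two elements of $\symexM(A)$ are equal iff, applied to an arbitrary path condition $\pc \in \pcs$, they yield the same set in $\poset(\braket{\symb{A}}{\pcs})$, and two such sets are equal iff they have the same members. Thus each law reduces to a pure set-comprehension identity obtained by logical manipulation of the defining predicates, with no appeal to any property of the underlying solver, sorts, or symbolic abstractions.

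For \textbf{left identity}, $\mathtt{bind}(\mathtt{return}(a), f) = f(a)$: applied to $\pc$, the inner computation $\mathtt{return}(a)(\pc)$ is the singleton $\{\braket{a}{\pc}\}$, so the first conjunct of the comprehension forces the drawn branch to be $\braket{a}{\pc}$, and the whole expression collapses to $\lambda\pc.~f(a)(\pc) = f(a)$. For \textbf{right identity}, $\mathtt{bind}(m, \mathtt{return}) = m$: here $\mathtt{return}(a)(\pc')$ is $\{\braket{a}{\pc'}\}$, so the second conjunct forces the output branch to coincide with the branch drawn from $m(\pc)$, giving $\lambda\pc.~m(\pc) = m$. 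Both are one-line calculations.

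The \textbf{associativity} law, $\mathtt{bind}(\mathtt{bind}(m,f),g) = \mathtt{bind}(m,\lambda x.~\mathtt{bind}(f(x),g))$, is the only one needing a little care. On each side I would fully unfold the nested $\mathtt{bind}$, obtaining on both sides (after substituting the definitions) a comprehension whose membership condition quantifies, via set membership, over an intermediate branch produced by $m(\pc)$, then one produced by $f$ applied to that branch, and finally the output of $g$; the two sides then differ only in how these three membership conditions are nested and ordered, and coincide after flattening the nested comprehensions. I would present this as a single displayed chain of equalities between set comprehensions, being careful to rename the bound value/path-condition pairs apart so the three-level quantification does not clash.

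The main obstacle is notational hygiene rather than mathematical depth. The definition of $\mathtt{bind}$ in the excerpt has a minor inconsistency between the path condition bound in the comprehension body and the one constrained by the predicate, and $f$ is written $f(a,\pc')$ rather than $f(a)(\pc')$; I would first pin down the precise reading $\mathtt{bind}(m,f) \eqdef \lambda\pc.~\{\braket{b}{\pc''} \mid \braket{a}{\pc'} \in m(\pc) \land \braket{b}{\pc''} \in f(a)(\pc')\}$ with $f : \symb{A} \to \symexM(\symb{B})$, and note that $\mathtt{return}$ is typed $\symb{A} \to \symexM(\symb{A})$ so the compositions typecheck. With that fixed, the three laws follow by the routine unfoldings sketched above.
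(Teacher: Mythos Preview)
Your proposal is correct: with the notational fix you identify for $\mathtt{bind}$, the three laws do reduce to the set-comprehension identities you describe, and the associativity check is indeed just a matter of flattening nested existentials over intermediate branches.

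The paper, however, takes a different and much shorter route. Rather than unfolding anything, it simply observes that $\symexM$ is the state monad transformer (with state $\pcs$) applied to the set monad, and invokes the standard fact that a monad transformer applied to a monad yields a monad. What this buys the paper: a one-line proof that also makes the conceptual structure of $\symexM$ explicit and connects it to well-known infrastructure. What your approach buys: it is entirely self-contained, requires no background on monad transformers, and verifies the laws against the actual definitions given (including catching the typo). Both are valid; the paper trades elementary detail for conceptual brevity, while your version is the proof one would actually mechanise from scratch.
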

\begin{proof}
This follows from the fact $\symexM$ is a state monad transformer applied to the set monad, so it trivially follows from monad transformers that the monad laws hold.
\end{proof}

\subsection{Soundness}

%

\begin{definition}[Symbolic branch]\label{def:symbolic-branch}

A symbolic branch is a pair composed of a symbolic abstraction and a path condition and is denoted $\braket{\symb{a}}{\pc} \in \braket{\symb{A}}{\pcs}$. Symbolic branches themselves are symbolic abstractions over $A$, where

\[
\sint, a \models \braket{\symb{a}}{\pc} \iff \sint, a \models \symb{a} \land \sint, \vtrue \models \pc
\]

\end{definition}

\begin{definition}[$\mox$ soundness]
Let $f : A \rightarrow \poset(B)$ be a nondeterministic function, $\symb{A}$ and $\symb{B}$ be symbolic abstrations over $A$ and $B$, and $\symb{f} : \symb{A} \rightarrow \symexM(\symb{B})$. $\symb{f}$ is $\mox$-sound \wrt $f$, denoted $\symb{f} \soundwrtox f$, if each outcome of $f$ is covered by a branch of $\symb{f}$ (is also an outcome of $\symb{f}$):
\[
f(a) \rightsquigarrow b \land \sint, a \models \braket{\symb{a}}{\pc} \implies
 \exists \symb{b}, \pc', \sint' \geq \sint.~ \symb{f}(\symb{a}, \pc) \rightsquigarrow \braket{\symb{b}}{\pc'} \land \sint', b \models \braket{\symb{b}}{\pc'}
\]
where $f(a) \rightsquigarrow b$ means that $b \in f(a)$.
\end{definition}

Note that the interpretation $\sint'$ is an extension of $\sint$, in order to take into account that $\symb{f}$ may introduce fresh variables that are not covered by $\sint$.

\begin{definition}[$\mux$ soundness]
$\symb{f}$ is $\mux$-sound \wrt $f$, denoted $\symb{f} \soundwrtux f$ if:
\[
\begin{array}{l}
\symb{f}(\symb{a}, \pc) \rightsquigarrow \braket{\symb{b}}{\pc'} \implies \\
\quad \left( \sat(\pc') \land \forall \sint.~ \sint, \vtrue \models \pc' \Rightarrow \exists b.~ \sint, b \models \symb{b} \right. \\
\qquad \left. \land (\forall b.~ \sint, b \models \braket{\symb{b}}{\pc'} \Rightarrow (\exists a.~ \sint, a \models \braket{\symb{a}}{\pc} \land f(a) \rightsquigarrow b))   \right)
\end{array}
\]
\end{definition}

Let us walk through this more complex definition. It states that, if the UX-sound symbolic computation has a transition $\symb{f}(\symb{a}, \pc) \rightsquigarrow \braket{\symb{b}}{\pc'}$, then: \begin{itemize}
\item the resulting path condition $\pc'$ must be satisfiable, \ie at some point it must be checked by a ($\mux$-approximate) solver;
\item if $\sint$ is a set of bindings that satisfy $\pc'$, then there must exist at least one model $b$ of the result $\symb{b'}$. This ensures that checking the path condition for satisfiability is sufficient to ensure satisfiability of the branch; and
\item for all models $b$ of $\braket{\symb{b}}{\pc}$, there is a corresponding model $a$ of $\braket{\symb{a}}{{\pc}}$ that form an input-output pair of $f$.
\end{itemize}

An important property always upheld by $\mux$-sound symbolic computations is that they are fully characterised by their path conditions: all invalid paths can be pruned by looking only at the path condition, which itself can be encoded into a solver. We believe this required for an efficient implementation of a bug-finder.

A simple example of a both $\mox$- and $\mux$-sound symbolic computation is the \lstinline|nondet|, which is sound with respect to the nondeterministic function which returns all values of the given sort. More complex symbolic computations are built by composing several such simple symbolic computations using $\mathtt{bind}$ and $\mathtt{branch\_on}$. We show that these two operations \emph{preserve} soundness.

\begin{theorem}[Soundness Preservation: Composition]\label{thm:bind-sound-app}
For a given mode $\mode \in \{ \mox, \mux \}$
\[
\symb{f} \soundwrtm f \land \symb{g} \soundwrtm g \implies \symb{g} \fishop \symb{f} \soundwrtm g \fishop f
\]
Where $p \fishop q = \lambda x.~ \mathtt{bind}~p(x)~q$ is the Kleisli composition operator (or ``fish'' operator)
\end{theorem}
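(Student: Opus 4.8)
The plan is to prove the claim separately for the two modes $\mode \in \{\mox, \mux\}$, in each case unfolding the definitions of $\mathtt{bind}$ and of the set-monad Kleisli composition so that the intermediate value produced by $\symb{f}$ (resp.\ $f$) becomes explicit, and then invoking the soundness hypotheses for $\symb{g}$ and $\symb{f}$ in turn. The key structural observation is that $\mathtt{bind}$ runs $\symb{f}$ first and then feeds the resulting branch---both its symbolic value \emph{and} its path condition---into $\symb{g}$; hence $(\symb{g} \fishop \symb{f})(\symb{a}, \pc) \rightsquigarrow \braket{\symb{b}}{\pc'}$ holds iff there is an intermediate branch $\braket{\symb{c}}{\pc_1}$ with $\symb{f}(\symb{a}, \pc) \rightsquigarrow \braket{\symb{c}}{\pc_1}$ and $\symb{g}(\symb{c}, \pc_1) \rightsquigarrow \braket{\symb{b}}{\pc'}$, and symmetrically $(g \fishop f)(a) \rightsquigarrow b$ iff there is a $c$ with $f(a) \rightsquigarrow c$ and $g(c) \rightsquigarrow b$. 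This reduces $\mathtt{bind}$ on $\symexM$, at the set level, to the composition of the two underlying set-monad Kleisli arrows (a consequence of \Cref{lem:symex-monad-laws}), and it lets the proof ``peel off'' $\symb{g}$ then $\symb{f}$, applying each hypothesis exactly once.

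For the $\mox$ case, assume $(g \fishop f)(a) \rightsquigarrow b$ and $\sint, a \models \braket{\symb{a}}{\pc}$; pick the witness $c$ with $f(a) \rightsquigarrow c$ and $g(c) \rightsquigarrow b$. Applying $\symb{f} \soundwrtox f$ yields $\braket{\symb{c}}{\pc_1}$ with $\symb{f}(\symb{a}, \pc) \rightsquigarrow \braket{\symb{c}}{\pc_1}$ and some $\sint_1 \geq \sint$ with $\sint_1, c \models \braket{\symb{c}}{\pc_1}$. Applying $\symb{g} \soundwrtox g$ to $g(c) \rightsquigarrow b$ and $\sint_1, c \models \braket{\symb{c}}{\pc_1}$ yields $\braket{\symb{b}}{\pc'}$ with $\symb{g}(\symb{c}, \pc_1) \rightsquigarrow \braket{\symb{b}}{\pc'}$ and $\sint' \geq \sint_1$ with $\sint', b \models \braket{\symb{b}}{\pc'}$. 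By the observation above, $\braket{\symb{b}}{\pc'}$ is a branch of $(\symb{g} \fishop \symb{f})(\symb{a}, \pc)$, and $\sint' \geq \sint$ by transitivity of the extension order, which discharges the $\mox$ obligation.

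For the $\mux$ case, assume $(\symb{g} \fishop \symb{f})(\symb{a}, \pc) \rightsquigarrow \braket{\symb{b}}{\pc'}$ and obtain the intermediate branch $\braket{\symb{c}}{\pc_1}$ as above. From $\symb{g} \soundwrtux g$ applied to $\symb{g}(\symb{c}, \pc_1) \rightsquigarrow \braket{\symb{b}}{\pc'}$ we immediately get $\sat(\pc')$ and, for every $\sint$ with $\sint, \vtrue \models \pc'$, some $b$ with $\sint, b \models \symb{b}$; these discharge the first two conjuncts of $\mux$-soundness. For the third conjunct, fix any $\sint$ with $\sint, \vtrue \models \pc'$ and any $b$ with $\sint, b \models \braket{\symb{b}}{\pc'}$; the same hypothesis supplies a $c$ with $\sint, c \models \braket{\symb{c}}{\pc_1}$ and $g(c) \rightsquigarrow b$, so in particular $\sint, \vtrue \models \pc_1$. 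Then $\symb{f} \soundwrtux f$ applied to $\symb{f}(\symb{a}, \pc) \rightsquigarrow \braket{\symb{c}}{\pc_1}$ (with this $\sint$ and this $c$) yields an $a$ with $\sint, a \models \braket{\symb{a}}{\pc}$ and $f(a) \rightsquigarrow c$. Combining, $c \in f(a)$ and $b \in g(c)$, hence $(g \fishop f)(a) \rightsquigarrow b$, completing the case.

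The $\mox$ direction is routine and needs nothing beyond transitivity of $\geq$. The only delicate point is in the $\mux$ case: one must thread a \emph{single} interpretation $\sint$ through the soundness conditions of both $\symb{g}$ and $\symb{f}$, and it is crucial that the model $c$ delivered by $\symb{g}$'s soundness already certifies $\sint, \vtrue \models \pc_1$, so that $\symb{f}$'s soundness can be invoked with the same $\sint$---no separate ``path conditions only grow along $\mathtt{bind}$'' lemma is then required. When writing this up I would also tidy the minor notational infelicities in the displayed definitions (the clash between $\pc'$ and $\pc''$ in the body of $\mathtt{bind}$, and the overloading whereby $\symb{g} \fishop \symb{f}$ is applied to both a symbolic value and a path condition), and state explicitly the set-level identity for $\mathtt{bind}$ used in the first paragraph, which makes the ``peeling'' steps rigorous.
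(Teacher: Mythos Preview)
Your proposal is correct and follows essentially the same route as the paper: in both modes you unfold $\mathtt{bind}$ to expose the intermediate branch, apply the two soundness hypotheses in order (first $\symb{f}$ then $\symb{g}$ for $\mox$; first $\symb{g}$ then $\symb{f}$ for $\mux$), and reassemble. Your handling of the $\mux$ satisfiability conjunct---obtaining $\sat(\pc')$ directly from $\symb{g}$'s soundness rather than from an extra side condition on the $\mathtt{bind}$ rule---is in fact slightly cleaner than the paper's write-up, and your observation about threading a single $\sint$ through both hypotheses is exactly the key step the paper uses.
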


\Cref{thm:bind-sound-app} states that composing two $m$-sound symbolic computations using $\mathtt{bind}$ yields a symbolic computation.

\begin{proof}
\pfcase{OX soundness preservation}
\begin{hypvlist}
\hypvitem{1} $f(a) \rightsquigarrow b \land \sint, a \models \braket{\symb{a}}{\pc}$

$\qquad \implies  \exists \symb{b}, \pc', \sint' \geq \sint.~ \symb{f}(\symb{a},\pc) \rightsquigarrow \braket{\symb{b}}{\pc'} \land \sint', b \models \braket{\symb{b}}{\pc'}$

\hypvitem{2} $g(b) \rightsquigarrow c \land \sint', b \models \braket{\symb{b}}{\pc'}$

$\qquad \implies  \exists \symb{c}, \pc'', \sint'' \geq \sint'.~ \symb{g}(\symb{b}, \pc') \rightsquigarrow \braket{\symb{c}}{\pc''} \land \sint'', c \models \braket{\symb{c}}{\pc''}$

\hypvitem{3} $\inferrule{
f(a) \rightsquigarrow b\\\\
g(b) \rightsquigarrow c
}{
h(a) \rightsquigarrow c
}$
\qquad
\newhyp{4} $\inferrule{
\symb{f}(\symb{a},\pc) \rightsquigarrow \braket{\symb{b}}{\pc'}\\\\
\symb{g}(\symb{b},\pc') \rightsquigarrow \braket{\symb{c}}{\pc''}
}{
\symb{h}(\symb{a},\pc) \rightsquigarrow \braket{\symb{c}}{\pc''}}$
\end{hypvlist}

\pflet $a \in A, c \in C, \sint \in \sints$ such that
\begin{hyphlist}
\newhyp{5}\quad h(a) \rightsquigarrow c
\and
\newhyp{6}\quad \sint, a \models \symb{a}
\end{hyphlist}

\pfprove \newgoal{1} $\exists \symb{c}, \pc,\pc'', \sint'' \geq \sint.~ \symb{h}(\symb{a},\pc) \rightsquigarrow \braket{\symb{c}}{\pc''} \land \sint'', c \models \braket{\symb{c}}{\pc''}$

From \hyp{3} and \hyp{5} we get that $\exists b.$
\begin{hyphlist}
\newhyp{8}\quad f(a) \rightsquigarrow b
\and
\newhyp{9}\quad g(b) \rightsquigarrow c
\end{hyphlist}

From \hyp{4} we have
\begin{hyphlist}
\newhyp{10.1}\quad \forall \symb{b}. \braket{\symb{b}}{\pc'} \in \symb{f}(\symb{a},\pc) \implies \symb{g}(\symb{b},\pc') \rightarrow -
\end{hyphlist}

From \hyp{1}, \hyp{6} and \hyp{8}, $\exists \symb{b}, \pc', \sint'.$
\begin{hyphlist}
\newhyp{sintp}\quad \sint' \geq \sint
\and
\newhyp{11}\quad \symb{f}(\symb{a},\pc) \rightsquigarrow \braket{\symb{b}}{\pc'}
\and
\newhyp{12}\quad \sint, b \models \braket{\symb{b}}{\pc'}
\end{hyphlist}

Then from \hyp{2}, \hyp{9}, \hyp{12} and \hyp{10.1}, $\exists \symb{c}, \pc'', \sint''.$
\begin{hyphlist}
\newhyp{sintpp}\quad \sint'' \geq \sint'
\and
\newhyp{13}\quad \symb{g}(\symb{b},\pc') \rightsquigarrow \braket{\symb{c}}{\pc''}
\and
\newhyp{14}\quad \sint, c \models \braket{\symb{c}}{\pc''}
\end{hyphlist}

Moreover, from \hyp{sintp}, \hyp{sintpp}, \hyp{12} and \hyp{14}, we have that $\pc(\sint'') = \vtrue$ and $\pc''(\sint'') = \vtrue$, and therefore:
\[
\newhyp{15}\quad (\pc \land \pc'')(\sint'') \eqdef \pc(\sint'') \land \pc''(\sint'') = \vtrue
\]

Additionally, from \hyp{15} we get
\[
\newhyp{15'}\quad \sat_{\mox}(\pc \land \pc'')
\]

From \hyp{4}, \hyp{11}, \hyp{13} and \hyp{15'} we obtain
\[
\newhyp{16}\quad \symb{h}(\symb{a},\pc) \rightsquigarrow \braket{\symb{c}}{\pc \land \pc''}
\]

Finally, from \hyp{14}, \hyp{15} and \hyp{16} we obtain \goal{1}\qed

\pfcase{UX soundness preservation}\resetproofcounters

\begin{hypvlist}

\hypvitem{1} $\symb{f}(\symb{a},\pc) \rightsquigarrow \braket{\symb{b}}{\pc'} \implies (\sat(\pc') ~\land$

$\qquad \ \ \forall \sint.~ \sint,\vtrue\models\pc' \Rightarrow (\exists b.~ \sint, b \models \symb{b} \land$

$\qquad \quad \quad \forall b.~ \sint,b \models \braket{\symb{b}}{\pc'} \Rightarrow (\exists a.~ \sint, a \models \braket{\symb{a}}{\pc} \land f(a)\rightsquigarrow b )))$

\hypvitem{2} $\symb{g}(\symb{b},\pc') \rightsquigarrow \braket{\symb{c}}{\pc''} \implies (\sat(\pc'') ~\land$

$\qquad \ \ \forall \sint'.~ \pc''(\sint') = \vtrue \Rightarrow (\exists c.~ \sint', c \models \symb{c} \land$

$\qquad \quad \quad \forall c.~ \sint',c \models \braket{\symb{c}}{\pc''} \Rightarrow (\exists b.~ \sint', b \models \braket{\symb{b}}{\pc'} \land g(b)\rightsquigarrow c )))$

\hypvitem{3} $\inferrule{
f(a) \rightsquigarrow b\\\\
g(b) \rightsquigarrow c
}{
h(a) \rightsquigarrow c
}$
\qquad
\newhyp{4} $\inferrule{
\symb{f}(\symb{a},\pc) \rightsquigarrow \braket{\symb{b}}{\pc'}\\\\
\symb{g}(\symb{b},\pc') \rightsquigarrow \braket{\symb{c}}{\pc''}
}{
\symb{h}(\symb{a}) \rightsquigarrow \braket{\symb{c}}{\pc \land \pc' \land \pc''}
}$
\end{hypvlist}

\pflet $\symb{a}, \symb{c}, \pc, \pc'''$ such that
\begin{hypvlist}
\hypvitem{5} $\symb{h}(\symb{a},\pc) \rightsquigarrow \braket{\symb{c}}{\pc'''}$
\end{hypvlist}

\pfprove
\begin{goalvlist}
\goalvitem{1} $
\forall \sint.~ \pc'''(\sint) = \vtrue \Rightarrow (\exists c.~ \sint, c \models \symb{c} \land$

$\qquad \qquad\forall c.~ \sint,c \models \symb{c} \Rightarrow (\exists a.~ \sint, a \models \braket{\symb{a}}{\pc} \land h(a)\rightsquigarrow b ))$

\goalvitem{2}\quad $\sat(\pc''')$
\end{goalvlist}

Before focusing on either goal, we start by establishing a few facts we can learn from \hyp{4} and \hyp{5}: $\exists \symb{b}, \pc, \pc', \pc''$
\begin{hyphlist}
\newhyp{7}\quad \symb{f}(\symb{a},\pc) \rightsquigarrow \braket{\symb{b}}{\pc'}
\and
\newhyp{8}\quad \symb{g}(\symb{b},\pc') \rightsquigarrow \braket{\symb{c}}{\pc''}
\and
\newhyp{9}\quad \pc''' = \pc \land \pc' \land \pc''
\and
\newhyp{10}\quad \sat_{\mux} (\pc \land \pc' \land \pc'')
\end{hyphlist}

From \hyp{10} and the definition of approximate UX solvers, we have\begin{hyphlist}
  \newhyp{6'}~\sat (\pc'')\and
  \newhyp{6''}~\sat(\pc')\and
  \newhyp{6'''}~\sat(\pc)
\end{hyphlist}

\hyp{9} and \hyp{10} immediately gives \goal{2}. Let us focus on proving \goal{1}. \pflet $\sint$ such that $\newhyp{6}~ \pc'''(\sint) = \vtrue$. Our new goals are: \begin{goalvlist}
\goalvitem{3} $\exists c.~ \sint, c \models \symb{c}$
\goalvitem{4} $\forall c.~ \sint,c \models \symb{c} \Rightarrow (\exists a.~ \sint, a \models \braket{\symb{a}}{\pc} \land h(a)\rightsquigarrow b )$
\end{goalvlist}

Goal \goal{3} is immediate from \hyp{2}, \hyp{6'} and \hyp{8}, so we focus on \goal{4}.

\pflet $c$ such that $\newhyp{11}~ \sint, c \models \symb{c}$.

From \hyp{2}, \hyp{6''} and \hyp{11}, we have $\exists b.$
\begin{hyphlist}
\newhyp{12}~ \sint, b \models \braket{\symb{b}}{\pc'}
\and
\newhyp{13}~ g(b) \rightsquigarrow c
\end{hyphlist}

From \hyp{1}, \hyp{7}, \hyp{6'''} and \hyp{12} we also have $\exists a.$
\begin{hyphlist}
\newhyp{14}~ \sint, a \models \braket{\symb{a}}{\pc}
\and
\newhyp{15}~ f(a) \rightsquigarrow b
\end{hyphlist}

Together, \hyp{13} and \hyp{15} entail \newhyp{16} $h(a) \rightsquigarrow c$, which, together with \hyp{14} forms \goal{4}.
\end{proof}

\vspace{10pt} 

\begin{theorem}[Soundness Preservation: Branching]
\label{thm:branch-sound-app}
Let $h$ and $\symb{h}$ be defined as \[
\begin{array}{lll}
h~(b, a) = \mathtt{if}~b~\mathtt{then}~f~a~\mathtt{else}~g~a
&\quad&
\symb{h}~(\symb{b}, \symb{a}) = \mathtt{branch\_on}~\symb{b}~(\symb{f}~\symb{a})~(\symb{g}~\symb{a})
\end{array}
\]
If $\mathtt{branch\_on}$ uses a $\mode$-approximate solver $\sat_\mode$, then
\[\symb{f} \soundwrtm f \land \symb{g} \soundwrtm g \implies \symb{h} \soundwrt h\]
where $\sint, (b, a) \models (\symb{b}, \symb{a}) \iff \sint, b \models \symb{b} \Leftrightarrow \sint, a \models \symb{a}$
\end{theorem}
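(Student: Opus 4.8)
As with the composition theorem (\cref{thm:bind-sound-app}), the statement $\symb h \soundwrt h$ is read as $\symb h \soundwrtm h$ for the mode $\mode$ of the solver used by $\mathtt{branch\_on}$, and the proof splits into the OX and UX cases. I read the pair abstraction $(\symb b, \symb a)$ componentwise, so that $\sint, (b,a) \models (\symb b, \symb a)$ holds precisely when $\sint, b \models \symb b$ and $\sint, a \models \symb a$. The workhorse in both cases is the defining equation of $\mathtt{branch\_on}$: on incoming path condition $\pc$ it contributes $\symb f(\symb a, \pc \land \symb b)$ exactly when $\sat_\mode(\pc \land \symb b)$, contributes $\symb g(\symb a, \pc \land \neg \symb b)$ exactly when $\sat_\mode(\pc \land \neg \symb b)$, and nothing otherwise; in particular every branch it returns is already a branch of one of the two sub-computations, run with the incoming path condition extended by $\symb b$ or by $\neg\symb b$.

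\emph{OX direction.} Assume $h(b,a) \rightsquigarrow c$ and $\sint, (b,a) \models \braket{(\symb b, \symb a)}{\pc}$, so $\sint, \vtrue \models \pc$, $\sint, b \models \symb b$ and $\sint, a \models \symb a$. Case-split on $b$. If $b = \vtrue$, then $h(b,a) = f(a)$, hence $f(a) \rightsquigarrow c$; moreover $\sint, \vtrue \models \symb b$, so $\sint$ witnesses $\sat(\pc \land \symb b)$ and thus $\sat_\mox(\pc \land \symb b)$ by the defining property of OX-approximate solvers. Therefore $\mathtt{branch\_on}$ runs its then-branch, i.e. every branch of $\symb f(\symb a, \pc \land \symb b)$ is a branch of $\symb h((\symb b, \symb a), \pc)$. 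Since also $\sint, a \models \braket{\symb a}{\pc \land \symb b}$, applying $\symb f \soundwrtox f$ with incoming path condition $\pc \land \symb b$ yields $\symb c, \pc', \sint' \geq \sint$ with $\symb f(\symb a, \pc \land \symb b) \rightsquigarrow \braket{\symb c}{\pc'}$ and $\sint', c \models \braket{\symb c}{\pc'}$; the former is a branch of $\symb h((\symb b, \symb a),\pc)$, discharging the goal. The case $b = \vfalse$ is symmetric, using $\symb g \soundwrtox g$, the path condition $\pc \land \neg\symb b$, and the else-branch.

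\emph{UX direction.} Assume $\symb h((\symb b, \symb a), \pc) \rightsquigarrow \braket{\symb c}{\pc'''}$. By the definition of $\mathtt{branch\_on}$ this branch arises from exactly one of the two sub-branches; assume the then-branch (the else case being symmetric). Then the solver check $\sat_\mux(\pc \land \symb b)$ succeeded, hence $\sat(\pc \land \symb b)$ by the property of UX-approximate solvers, and $\symb f(\symb a, \pc \land \symb b) \rightsquigarrow \braket{\symb c}{\pc'''}$. Instantiating $\symb f \soundwrtux f$ with incoming path condition $\pc \land \symb b$ gives at once $\sat(\pc''')$ (the first UX obligation for $\symb h$), and for every $\sint$ with $\sint, \vtrue \models \pc'''$ a model $c$ of $\symb c$ and, for every model $c$ of $\braket{\symb c}{\pc'''}$, some $a$ with $\sint, a \models \braket{\symb a}{\pc \land \symb b}$ and $f(a) \rightsquigarrow c$. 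For such an $a$, unfolding $\braket{\symb a}{\pc \land \symb b}$ gives $\sint, a \models \symb a$, $\sint, \vtrue \models \pc$ and $\sint, \vtrue \models \symb b$; putting $b \eqdef \vtrue$ we get $\sint, b \models \symb b$, hence $\sint, (b,a) \models \braket{(\symb b, \symb a)}{\pc}$, and $h(b,a) = f(a) \rightsquigarrow c$, which is the remaining UX obligation for $\symb h$.

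\textbf{Main obstacle.} The only real friction is the path-condition bookkeeping inside $\mathtt{branch\_on}$. In the OX direction one must \emph{produce} the check $\sat_\mox(\pc \land \symb b)$ from genuine satisfiability witnessed by $\sint$ (via the $\sat \Rightarrow \sat_\mox$ law) so that the relevant sub-branch is actually taken; in the UX direction one must instead \emph{consume} the successful $\sat_\mux$ check to recover genuine satisfiability, and — crucially — feed the \emph{extended} path condition $\pc \land \symb b$ (not $\pc$) into the soundness hypothesis for $\symb f$, matching it against the $\pc$-argument in the definitions of $\mathtt{branch\_on}$ and of $\mode$-soundness. The second mild point is aligning the concrete branch choice with the symbolic one: reading off $\sint, \vtrue \models \symb b$ inside the then-branch is exactly what licenses taking $b \eqdef \vtrue$ in the UX argument (symmetrically $\vfalse$ for the else-branch).
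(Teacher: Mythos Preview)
Your proof is correct and follows essentially the same approach as the paper: split into the OX and UX cases, in OX case-split on the concrete boolean $b$ and in UX case-split on which symbolic sub-branch produced the result, and in each case apply the corresponding sub-soundness hypothesis with the suitably extended path condition. The only cosmetic difference is that you thread the guard into the \emph{incoming} path condition of $\symb f$/$\symb g$ (matching the main-text definition of $\mathtt{branch\_on}$), whereas the paper's appendix proof phrases $\symb h$ via inference rules that instead conjoin the guard to the \emph{outgoing} path condition and place the $\sat_\mode$ check there; the argument is otherwise identical.
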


\Cref{thm:branch-sound-app} states that $\mathtt{branch\_on}$ is effectively a symbolic lifting of the concrete \lstinline|if...else...| construct.

\begin{proof}
\pfcase{OX soundness preservation}

\begin{hypvlist}

\hypvitem{t-app} $\forall a, c, \pc, \symb{a}, \sint.~ f(a) \rightsquigarrow c \land \sint, a \models \braket{\symb{a}}{\pc}$

$\qquad \implies  \exists \symb{c}, \pc', \sint' \geq \sint.~ \symb{f}(\symb{a}, \pc) \rightsquigarrow \braket{\symb{c}}{\pc'} \land \sint', c \models \braket{\symb{c}}{\pc'}$

\hypvitem{e-app} $\forall a, c, \pc, \symb{a}, \sint.~ g(a) \rightsquigarrow c \land \sint, a \models \braket{\symb{a}}{\pc}$

$\qquad \implies  \exists \symb{c}, \pc', \sint' \geq \sint.~ \symb{g}(\symb{a}, \pc) \rightsquigarrow \braket{\symb{c}}{\pc'} \land \sint', c \models \braket{\symb{c}}{\pc'}$

\hypvitem{defconc} $\inferrule{
f(a) \rightsquigarrow c
}{
h(\vtrue, a) \rightsquigarrow c
}$
\qquad
$\inferrule{
g(a) \rightsquigarrow c
}{
h(\vfalse, a) \rightsquigarrow c
}$

\hypvitem{defsym} $\inferrule{
\symb{f}(\symb{a}, \pc) \rightsquigarrow \braket{\symb{c}}{\pc'}\\\\
\sat_{\mox} (\pc_b \land \pc')
}{
\symb{h}(\pc_b, \symb{a},\pc) \rightsquigarrow \braket{\symb{c}}{\pc_b \land \pc'}
}$
\qquad
$\inferrule{
\symb{g}(\symb{a},\pc) \rightsquigarrow \braket{\symb{c}}{\pc'}\\\\
\sat_{\mox} (\neg\pc_b \land \pc')
}{
\symb{h}(\pc_b, \symb{a}, \pc) \rightsquigarrow \braket{\symb{c}}{\neg\pc_b \land \pc'}
}$
\end{hypvlist}

\pflet $a, b, c, \symb{a}, \sint$ such that:
\begin{hyphlist}
\newhyp{hexec}~ h(b, a) \rightsquigarrow c
\and
\newhyp{amodel}~ \sint, a \models \symb{a}
\and%
\newhyp{bmodel}~ \pc_b(\sint) = b%
\end{hyphlist}

\pfprove \newgoal{goal} $\exists \symb{c}, \pc_f, \sint' \geq \sint.~ \symb{h}(\pc_b, \symb{a}, \pc) \rightsquigarrow \braket{\symb{c}}{\pc_f} \land \sint, c \models \braket{\symb{c}}{\pc_f}$

There are two cases to consider: either $b$ is $\vtrue$ or it is $\vfalse$. We only consider the $\vtrue$ case, the $\vfalse$ one being analogous.

\pfassume \newhyp{btrue} $b = \vtrue$

From \hyp{defconc}, \hyp{hexec} and \hyp{btrue}, we have that necessarily: \[\newhyp{ftc}~f(a) \rightsquigarrow c\]

Using \hyp{btrue}, \hyp{amodel} and \hyp{bmodel} we can apply \hyp{t-app} and obtain $\exists \pc, \pc', \symb{c}, \sint'$
\begin{hyphlist}
\newhyp{sintp}~ \sint' \geq \sint
\and
\newhyp{sfres}~ \symb{f}(\symb{a},\pc) \rightsquigarrow \braket{\symb{c}}{\pc'}
\and
\newhyp{cmodc}~ \sint', c \models \braket{\symb{c}}{\pc'}
\end{hyphlist}

From \hyp{cmodc} and \Cref{def:symbolic-branch}, we know that $\pc'(\sint') = \vtrue$, which combined with \hyp{sintp}, \hyp{bmodel} and \hyp{btrue} gives us
\newhyp{alltrue} $(\pc_b \land \pc')(\sint') = \vtrue$, which naturally implies $\sat(\pc_b \land \pc')$, and therefore \newhyp{allsat} $\sat_{\mox}(\pc_b \land \pc')$.

Additionally, from \hyp{cmodc}, \hyp{alltrue} and the definition of branch satisfiability, we also have that \newhyp{cmodcfinal} $\sint', c \models \braket{\symb{c}}{\pc_b \land \pc'}$

From \hyp{allsat} and \hyp{sfres} we can apply \hyp{defsym} and obtain \newhyp{shres} $\symb{h}(\pc_b, \symb{a}, \pc) \rightsquigarrow \braket{\symb{c}}{\pc_b \land \pc'}$

Together, \hyp{cmodcfinal} and \hyp{shres} form our goal \goal{goal}.

\pfcase{UX soundness preservation}\resetproofcounters

\begin{hypvlist}

\hypvitem{ft-app} $\forall \symb{a}, \symb{c}, \pc, \pc'.~ \symb{f}(\symb{a}, \pc) \rightsquigarrow \braket{\symb{c}}{\pc'} \implies (\sat(\pc') ~\land$

$\qquad \quad \ \ \forall \sint.~ \sint,\vtrue\models\pc' \Rightarrow (\exists c.~ \sint, c \models \symb{c} \land$

$\qquad \quad \quad \forall c.~ \sint,c \models \braket{\symb{c}}{\pc'} \Rightarrow (\exists a.~ \sint, a \models \braket{\symb{a}}{\pc} \land f(a)\rightsquigarrow c )))$

\hypvitem{fe-app} $\forall \symb{a}, \symb{c}, \pc, \pc'.~ \symb{g}(\symb{a},\pc) \rightsquigarrow \braket{\symb{c}}{\pc'} \implies (\sat(\pc') ~\land$

$\qquad \quad \ \ \forall \sint.~ \sint,\vtrue\models\pc' \Rightarrow (\exists c.~ \sint, c \models \symb{c} \land$

$\qquad \quad \quad \forall c.~ \sint,c \models \braket{\symb{c}}{\pc'} \Rightarrow (\exists a.~ \sint, a \models \braket{\symb{a}}{\pc} \land g(a)\rightsquigarrow c )))$

\hypvitem{defconc} $\inferrule{
f(a) \rightsquigarrow c
}{
h(\vtrue, a) \rightsquigarrow c
}$
\qquad
$\inferrule{
g(a) \rightsquigarrow (c)
}{
h(\vfalse, a) \rightsquigarrow c
}$

\hypvitem{defsym} $\inferrule{
\symb{f}(\symb{a},\pc) \rightsquigarrow \braket{\symb{c}}{\pc'}\\\\
\sat_{\mux} (\pc_b \land \pc')
}{
\symb{h}(\pc_b, \symb{a},\pc) \rightsquigarrow \braket{\symb{c}}{\pc_b \land \pc'}
}$
\qquad
$\inferrule{
\symb{g}(\symb{a},\pc) \rightsquigarrow \braket{\symb{c}}{\pc'}\\\\
\sat_{\mux} (\neg\pc_b \land \pc')
}{
\symb{h}(\pc_b, \symb{a},\pc) \rightsquigarrow \braket{\symb{c}}{\neg\pc_b \land \pc'}
}$
\end{hypvlist}

\pflet $\symb{a}, \pc_b, \pc, \symb{c}, \pc_f$ such that
\begin{hypvlist}
\hypvitem{symhres} $\symb{h}(\pc_b, \symb{a}, \pc) \rightsquigarrow \braket{\symb{c}}{\pc_f}$
\end{hypvlist}

\pfprove
\begin{goalvlist}
\goalvitem{reach} $\sat(\pc_f)$
\goalvitem{ux} $\forall \sint.~ \sint,\vtrue\models\pc_f \Rightarrow (\exists c.~ \sint, c \models \symb{c} \land$

$\qquad \quad \quad \forall c.~ \sint,c \models \braket{\symb{c}}{\pc'} \Rightarrow (\exists a,b .~ \sint, a \models \symb{a} \land \pc_b(\sint) = b \land h(b, a)\rightsquigarrow c ))$
\end{goalvlist}

From \hyp{symhres}, and by inversions on the rules given in \hyp{defsym}, there are two ways the result could have been obtained. We consider the case where the ``then'' branch was taken, the other being analogous.

\pfcase{then branch taken}
\begin{hyphlist}
\newhyp{symf}~\symb{f}(\symb{a},\pc) \rightsquigarrow \braket{\symb{c}}{\pc'}
\and
\newhyp{muxsat}~\sat_{\mux}(\pc_b \land \pc')
\end{hyphlist}

The goal \goal{reach} is immediately achieved by \hyp{muxsat} and the definition of approximate UX solvers. In addition, we also get:
\begin{hyphlist}
\newhyp{pcpsat}~\sat(\pc')
\and
\newhyp{pcsat}~\sat(\pc_b)
\end{hyphlist}

We now focus on \goal{ux}. \pflet $\sint$ such that \newhyp{spcftrue} $\pc_f(\sint) = \vtrue$. This implies that \begin{hyphlist}
  \newhyp{spctrue}~ \pc_b(\sint) = \vtrue
  \and
  \newhyp{spcptrue}~ \pc'(\sint) = \vtrue
\end{hyphlist}

From \hyp{ft-app}, \hyp{symf} and \hyp{spcptrue}, we have $\exists c.$
\begin{hyphlist}
\newhyp{cmod}~\sint, c \models \braket{\symb{c}}{\pc'}
\end{hyphlist}
We take such a $c$. From \hyp{ft-app}, \hyp{symf}, \hyp{spcptrue} and \hyp{cmod}, we also have $\exists a.$
\begin{hyphlist}
\newhyp{amod}~\sint, a \models \symb{a}
\and
\newhyp{fexec}~f(a) \rightsquigarrow c
\end{hyphlist}

from \hyp{fexec} and \hyp{spctrue}, we get that \[
\newhyp{ch}~ h(b, a) \rightsquigarrow c
\]

which concludes the proof of \goal{ux}.

\pfcase{else branch taken} Analogous to the previous case.

\end{proof}
}{}

\end{document}